\newtheorem{theorem}{Theorem}
\newtheorem{lemma}{Lemma}
\theoremstyle{definition}
\newtheorem{definition}{Definition}
\newtheorem{corollary}{Corollary}
\newcommand{\ff}{\textsc{FirstFit}\xspace}
\newcommand{\ffp}{\textsc{FirstFitPredictions}\xspace}
\newcommand{\ffpshort}{\textsc{FFP}\xspace}
\newcommand{\rffp}{\textsc{RobustFirstFitPredictions}\xspace}
\newcommand{\rffpshort}{\textsc{RFFP}\xspace}
\newcommand{\bcm}{\textsc{BiColorMax}\xspace}
\providecommand{\keywords}[1]
{
  \small
  \textbf{\textit{keywords---}}#1
}
\title{Online  Graph Coloring with Predictions}
\author{Antonios Antoniadis , Hajo Broersma, Yang Meng
\\[2mm]
\small Faculty of Electrical Engineering, Mathematics and Computer Science \\
\small  University of Twente
}
\begin{document}

\maketitle

\begin{abstract}
We introduce learning augmented algorithms to the online graph coloring problem.
Although the simple greedy algorithm \ff is known to perform poorly in the worst case, we are able to establish a relationship between the structure of any input graph $G$ that is revealed online and the number of colors that \ff uses for $G$. Based on this relationship, we propose an online coloring algorithm \ffp that extends \ff while making use of machine learned predictions. We show that \ffp is both \emph{consistent} and \emph{smooth}. Moreover, we develop a novel framework for combining online algorithms at runtime specifically for the online graph coloring problem. Finally, we show how this framework can be used to robustify \ffp by combining it with any classical online coloring algorithm (that disregards the predictions).
\end{abstract}
\keywords{ learning augmented algorithms, online algorithms, online graph coloring, first fit}

\thispagestyle{empty}

\pagebreak
\setcounter{page}{1}

\section{Introduction}
Before we will properly define the concepts we use throughout this paper in Section~\ref{pre}, let us start with a less formal introduction to the subject, together with some background and motivation.

Graph coloring is a central topic within graph theory that finds its origin in the notorious Four Color Problem, dating back to 1852. Since then graph coloring has developed into a mature research field with numerous application areas, ranging from scheduling~\cite{Marx_2004, Manjula} and memory allocation~\cite{Barenboim} to robotics~\cite{DEMANGE}. In the online version of the problem, the vertices of a (usually unknown) graph arrive online one by one, together with the adjacencies to the already present vertices. Upon the arrival of each vertex $v$, a color has to be irrevocably assigned to $v$. The goal is to obtain a proper coloring, that is a coloring in which no two adjacent vertices have the same color. The challenge is to design a coloring strategy which keeps the total number of assigned colors relatively low.

In this paper, we introduce the online graph coloring problem to learning augmented algorithms. We assume that alongside the arrival of each vertex, the algorithm also obtains a prediction $P(v)$ (of unknown quality) on the color that should be assigned to $v$. These predictions may be used by an algorithm to obtain a coloring which uses fewer colors, if the predictions turn out to be relatively accurate. At the same time, the algorithm should maintain a worst-case guarantee to safeguard against the case in which the predictions are inaccurate.

Graph coloring is notoriously hard, already in the \emph{offline setting}, where the whole input graph is known in advance. Let $\chi(G)$ denote the \emph{chromatic number} of a graph $G$, that is the minimum number of distinct colors needed to obtain a proper coloring of $G$. A straightforward observation is that $\chi(G)$ is greater than or equal to $\omega(G)$, which is defined as the cardinality of a maximum clique in $G$. However, there even exist triangle-free graphs (so with no cliques of cardinality 3) with an arbitrarily large chromatic number. In case $\chi(H)=\omega(H)$ for every induced subgraph $H$ of a graph $G$, then $G$ is called a \emph{perfect graph}. Perfect graphs are known to be $\chi(G)$-colorable in polynomial-time via semidefinite programming~\cite{CHUDNOVSKY}. An interesting special case of perfect graphs is the class of bipartite graphs. These graphs admit a proper coloring using only $2$ colors, and in the offline setting such a $2$-coloring can be computed in linear time, for example via breadth first search. However, in general it is an NP-complete problem to decide whether a given graph admits a proper coloring using $k$ colors, for any fixed $k\ge 3$~\cite{Karp1972}. With respect to approximation algorithms, there is a polynomial-time algorithm using at most $O(n(\log \log n)^2/(\log n)^3)\cdot \chi(G)$~\cite{HALLDORSSON199319} colors for a graph on $n$ vertices, and it is NP-hard to approximate the chromatic number within a factor $n^{1-\epsilon}$ for all $\epsilon >0$~\cite{Zuckerman07}.

The problem becomes even more challenging in the \emph{online setting} where, as mentioned, vertices arrive online one by one and an algorithm has to irrevocably assign a color to each vertex upon its arrival -- while only having knowledge of the subgraph revealed so far. Any online algorithm may require at least $(2n/(\log n)^2)\chi(G)$ colors in the worst case~\cite{HALLDORSSON1994163}, where $n$ is the number of vertices of the input graph -- and this is true even for bipartite graphs (so with chromatic number $2$).
Restricting the input graphs even further, for instance to $P_6$-free bipartite graphs (containing no path on six vertices as an induced subgraph), no online algorithm can guarantee a coloring with constantly many colors~\cite{[1]}. In such cases, bounding the number of colors used from above by a function of the chromatic number is not feasible. To this end, a line of research has focused on developing so-called \emph{online-competitive algorithms}. Applied to any graph $G$, such online algorithms are guaranteed to produce a proper coloring, the number of colors of which is bounded from above by a function of the number of colors used by the \emph{best possible online algorithm} for $G$. As an example, in the previously mentioned setting of $P_6$-free bipartite graphs, there exists an online algorithm that uses at most
twice as many colors as the best possible online algorithm~\cite{[7],[10]}. A good source for more information regarding online graph coloring is the following book chapter due to
Kierstead~\cite{Kierstead96}. We will come back to this in Section~\ref{sec:graph-classes}, where we review and utilize several known results, including more recent work.

Probably the conceptually simplest online algorithm for graph coloring is the greedy algorithm, which is known as \ff. Suppose we have a total order over the set of available colors. Then upon arrival of each vertex, \ff assigns to it the smallest color according to that order, among the ones that maintain a proper coloring. This algorithm has been extensively analyzed in the literature, also for particular graph classes~\cite{[1],[16]}. Although \ff performs well for many practically relevant inputs, it can be very sensitive to the order in which the vertices of $G$ are revealed. For instance, in a popular example where $G$ is a complete balanced bipartite graph $K_{n,n}$ minus a perfect matching, there is a specific permutation on the arrival of the vertices of $G$ for which \ff requires $n$ colors (whereas $\chi(G)=2$)~\cite{Johnson}.

That \ff performs well in some practical scenarios, can be attributed to the fact that real-world graph coloring instances rarely resemble worst-case
inputs. It is often the case that either the structure of the input graph or the permutation in which the vertices are revealed can be exploited by a heuristic or a machine-learning approach in order to yield reasonably good colorings, despite the inherent worst-case difficulty of the problem. However, and not too surprisingly, such approaches tend to come without a worst-case guarantee. In the (hopefully rare) cases where the input diverges substantially from the expected structure, the resulting coloring could be arbitrarily poor.

In this paper, we design an algorithm that incorporates predictions of unknown quality obtained by such a machine-learned approach. It produces a relatively good coloring in case the predictions turn out to be accurate, while at the same time providing a worst-case guarantee comparable to the best classical online algorithm that does not make use of the predictions. Our work falls within the context of \emph{learning augmented algorithms}.

Learning augmented algorithms is a relatively new and very active field. The main goal is to develop algorithms combining the respective advantages of machine-learning approaches and classical worst case
algorithm analysis. A plethora of online problems have already been investigated through the learning augmented algorithm lens, including for example, caching~\cite{DBLP:conf/icml/LykourisV18,antoniadis2022paging}, facility location~\cite{cohen2023general}, ski-rental~\cite{DBLP:conf/nips/PurohitSK18,shin2023improved}, or
various scheduling problems~\cite{DBLP:conf/nips/PurohitSK18,DBLP:conf/soda/LattanziLMV20}, to name just a few. To the best of our knowledge, learning augmented algorithms have not been studied for graph coloring problems to date. For a more extensive discussion on learning augmented algorithms, we
refer the interested reader to a recent
survey~\cite{MitzenmacherV22}.

A common approach to developing learning augmented algorithms is to design an algorithm that attempts to follow the predictions, in some sense. At the same time, this algorithm should be robustified by appropriately
combining it with a classical algorithm that disregards the predictions. At a high level, this is our approach for graph coloring as well. However, the nature of the problem poses several novel challenges. First of all, already assigned colors may significantly restrict the choice of colors for the next and future assignments of
the algorithm. This is in contrast to settings where an algorithm can, at some cost, move to any arbitrary configuration, for instance, in problems with an underlying metric. The fact that it is not possible for an algorithm to move to any possible configuration also rules out a robustification approach by combining algorithms in an experts-like setting. See~\cite{AntoniadisCE0S20} for more information. Secondly, existing online algorithms for online graph coloring do not possess a particular monotonicity property that tends to be a crucial ingredient in robustifying algorithms for other problems. In particular, it is possible that running an algorithm only on the graph induced by a suffix of the input permutation requires significantly more colors than running the same algorithm on the graphs of the complete input permutation. This further complicates the robustification, since one can not directly use a classical algorithm as a fall back option upon recognizing that the predictions are of insufficient quality.

\subsection{Preliminaries}\label{pre}
In this section, we formally define the problem setting and its associated prediction model. We start with the concepts related to (offline) graph coloring.

\begin{definition}[Graph coloring~\cite{[4]}]
A \emph{$k$-vertex coloring}, or simply a \emph{$k$-coloring}, of a graph $G$ is a mapping $\phi:V(G) \rightarrow S$, where $S$ is a set of $k$ colors. A $k$-coloring
is proper if no two adjacent vertices are assigned the same color. A graph is \emph{$k$-colorable} if it admits a proper $k$-coloring. The minimum $k$ for which a graph $G$ is $k$-colorable is called its \emph{chromatic number}, denoted by $\chi(G)$. An {\em optimal coloring} of $G$ is a proper $\chi(G)$-coloring.
\end{definition}

Online graph coloring describes the setting in which the vertices of $G$ arrive one by one in an online fashion and have to be irrevocably and properly colored upon arrival.

\begin{definition}[Online graph coloring~\cite{[3]}]
An \emph{online graph} $(G,\pi)$ is a graph $G$
together with a permutation $\pi = v_1,v_2,\ldots,v_n$ of $V(G)$. An \emph{online coloring algorithm} takes an online graph $(G,\pi)$ as input and produces a proper coloring of $V(G)$, where the color of a vertex $v_{i}$ is chosen from a \emph{universe} $\mathcal{U}$ of available colors and the choice depends only on the subgraph of $G$ induced by $\{v_{1},\ldots,v_{i}\}$ and the colors assigned to $v_{1},\ldots,v_{i-1}$, for $1 \le i \le n$.
\end{definition}

Throughout the paper, and unless otherwise specified, algorithm refers to a \emph{deterministic} algorithm.

We next define the notions of \emph{competitiveness, online competitiveness} and \emph{competitive ratio}, which are used to evaluate the performance of algorithms for online graph coloring.

\begin{definition}[Competitiveness~\cite{[1]}, online competitiveness~\cite{[8]} and competitive ratio~\cite{[6]}]
Let $AOL(G)$ be the set of all online coloring algorithms for a graph $G$ and let $\Pi(G)$ be the set of all permutations of $V(G)$. For an algorithm $A\in{AOL(G)}$ and a permutation $\pi\in{\Pi(G)}$, the number of colors used by $A$ when $V(G)$ gets revealed according to $\pi$ is denoted by $\chi_A(G,\pi)$. The $A$-chromatic number of $G$ is the largest number of colors used by the online algorithm $A$ for the graph $G$, denoted by $\chi_A(G)$. That is,
\begin{center}
$\chi_A(G)=\mathop{\max}\limits_{\pi\in{\Pi(G)}}\chi_{A}(G,\pi)$.
\end{center}
For a graph $G$, the online chromatic number $\chi_{OL}(G)$ is the minimum number of colors used for $G$, over all algorithms of $AOL(G)$. That is,
\begin{center}
$\chi_{OL}(G)=\mathop{\min}\limits_{A\in{AOL(G)}}\chi_{A}(G)$.
\end{center}
Let $\mathcal{G}$ be a family of graphs and $AOL(\mathcal{G})$ be the set of online algorithms for $\mathcal{G}$. For some $A\in{AOL(\mathcal{G})}$, if there exists a function such that $\chi_{A}(G)\leq{f(\chi(G))}$, (resp. $\chi_{A}(G)\leq{f(\chi_{OL}(G))}$) holds for every $G\in{\mathcal{G}}$, then $A$ is \emph{competitive} (resp. \emph{online competitive}) on $\mathcal{G}$.

Furthermore, the \emph{competitive ratio} of an algorithm $A\in{AOL(\mathcal{G})}$ over a class of graphs $\mathcal{G}$ is the maximum of $\frac{\chi_{A}(G)}{\chi(G)}$ for all $G\in{\mathcal{G}}$.
\end{definition}

We note that the notion of competitiveness used here follows the literature on online graph coloring and contrasts the definition commonly used for other online problems, where an algorithm is said to be competitive if it attains a constant competitive ratio. We complete this section by presenting the basic definitions associated with the setting, in which we involve predictions on the colors.

\paragraph{Predictions and prediction error.}

Here we assume that alongside the disclosure of each vertex $v$, the algorithm also obtains a prediction $P: V(G)\rightarrow \mathcal{U}$ on the color of $v$, where $\mathcal{U}$ is the set of available colors. These predictions are aimed at obtaining a reasonable coloring. They may stem from a machine-learning approach based on past inputs or training data, or from a simple heuristic known to perform well in practice. The quality of the obtained predictions is measured by means of a \emph{prediction error}. This \emph{prediction error} is defined naturally to be the (smallest) number of vertices that obtained wrong predictions.

\begin{definition}[Prediction error]
 Given an online graph $(G,\pi)$, let $\mathcal{O}(G)$ be the set of all optimal colorings of $G$, where $O\in\mathcal{O}$ assigns color $O(v)\in \mathcal{U}$ to vertex $v$. Then the \emph{prediction error} for online graph $(G,\pi)$ is given by
\begin{align*}
    \eta(G) = \min_{O\in\mathcal{O}(G)}\Sigma_{v\in V(G)} |\{P(v)\}\setminus \{O(v)\}|.
\end{align*}

In the following, we drop the dependence on $G$ when the underlying online graph is clear from the context. We use the notation $(G,\pi,P)$ to refer to an \emph{online graph with predictions}, where $G$ is the underlying graph, $\pi$ the permutation in which
$V$ is revealed, and $P$ the set of associated predictions.
\end{definition}

Following the literature, we say that an algorithm is \emph{$\alpha$-consistent} if it attains a competitive ratio of $\alpha$ in the case that the predictions are perfect ($\eta=0$), and \emph{robust} if it independently of the prediction error always obtains a competitive ratio within a constant factor of that of the best known classical online algorithm. Furthermore, we say that an algorithm is \emph{smooth} if its competitive ratio degrades at a rate that is at most linear in the prediction error. Note that the notion of robustness also extends to the case where the best known classical online algorithm $A$ is ``only" online competitive. In this scenario, any algorithm that is guaranteed to use at most a constant number of colors more than what $A$ uses is \emph{robust}.

We note that one can trivially obtain an optimal coloring when the predictions are perfect (in other words when $\eta = 0$) by just coloring each vertex $v$ with color $P(v)$ upon arrival. This already is a $1$-consistent algorithm. However, when the obtained predictions are only slightly off, this algorithm may not even be a valid algorithm for online graph coloring. Indeed, consider the case where only one vertex $v$ receives a wrong prediction $P(v)$ but is adjacent to a vertex $u$ with $P(u)=P(v)$.

\subsection{Our Contribution}
Our first contribution lies in establishing a relationship between the structure of an online graph $(G,\pi)$ and the amount of colors used by \ff for $G$. We emphasize that this result is independent of predictions and might be of broader interest. More specifically, in Section~\ref{sec:structural} we show that if \ff uses $x$ colors for $G$, then there exists a set $V'\subseteq V$ of vertices of size $|V'|=x+q$, for some $0\le q\le x-2$, such that $V'$ can be partitioned into $q+1$ non-trivial subsets, each of which is a \emph{clique} in $G$ (that is,
each subset consists of at least two vertices which are all pairwise adjacent in $G$).
Our result is even constructive, i.e., we present an algorithm for finding $V'$ and a partitioning
that satisfies these properties.

\begin{theorem}
\label{thm:main-struct}
  Let $(G,\pi)$ be an online graph for which \ff uses $x$ colors. Then, there exists a set $V'\subseteq V$ of size $|V'| = x+q$ with $0\le q \le x-2$, such that $V'$ can be partitioned into $q+1$ non-trivial subsets of vertices, each of which is a clique.
\end{theorem}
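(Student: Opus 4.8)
The plan is to exploit the defining property of \ff: whenever a vertex $v$ receives color $c$, it must have a neighbor already colored with each of the colors $1,2,\ldots,c-1$ (otherwise \ff would have used a smaller color). In particular, if \ff uses $x$ colors in total, pick a vertex $w$ that received the highest color $x$. Then $w$ has neighbors $w_1,\ldots,w_{x-1}$ colored $1,\ldots,x-1$ respectively, and together with $w$ these form a "chain" witnessing all $x$ colors. This already gives a clique-like structure in a weak sense, but the neighbors $w_i$ need not be pairwise adjacent, so $\{w,w_1,\ldots,w_{x-1}\}$ is not itself a clique. The heart of the argument is therefore to convert this witness into a partition into genuine cliques while controlling the total size $x+q$.

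The natural approach is a recursive or greedy peeling argument on the colors. First I would formalize the witness structure: define a "color-witness" backward from the top color. Starting at $w$ with color $x$, it has a neighbor with color $x-1$; that neighbor in turn has neighbors in each lower color, and so on. I would try to build the cliques greedily from the top: take the highest-colored vertex, and grow a clique downward by repeatedly moving to an already-colored neighbor of the current vertex whose color is the largest available that still extends the clique (i.e., is adjacent to all vertices chosen so far). Each such clique, once it cannot be extended, is a \emph{non-trivial} clique (it has at least two vertices, since any vertex with color $\ge 2$ has a lower-colored neighbor). After extracting one maximal clique, I would recurse on the "gap" colors that were skipped. The key bookkeeping is that a clique covering a set of distinct colors of size $s$ contributes $s$ vertices but, crucially, covers $s$ distinct color values; the overlap/reuse of colors across different cliques is exactly what produces the surplus $q$ beyond $x$.

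To nail down the bounds $|V'| = x+q$ and $0 \le q \le x-2$, I would set up an inductive invariant on the number of colors. If \ff uses $x$ colors, the first extracted clique accounts for some top color down to some color; the colors it covers are distinct, so if it has $k$ vertices it "uses up" $k$ of the available color slots. The remaining structure still has every color from $1$ up to some value realized by lower-colored neighbors, and I would apply the induction hypothesis to the subgraph/subproblem on the lower colors. Counting: each of the $q+1$ cliques has at least $2$ vertices, so $|V'| \ge 2(q+1)$; and since the cliques together realize all $x$ colors with $q$ "extra" vertices beyond a spanning set of $x$ distinct colors, we get $|V'| = x + q$. The upper bound $q \le x-2$ should follow because the number of cliques $q+1$ is at most $\lfloor x/2 \rfloor$ (each clique needs $\ge 2$ vertices and the cliques are color-disjoint in their witnessing role), combined with $|V'|\le$ some function of $x$; I would verify the extremal case ($q=0$, a single clique of size $x$, i.e.\ $G$ contains a $K_x$) and the opposite case carefully to pin the exact constant.

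The main obstacle I anticipate is the \emph{clique} requirement: the greedy "chain" of lower-colored neighbors guarantees each chosen vertex has \emph{some} neighbor of every smaller color, but not that the chosen vertices are mutually adjacent. Ensuring pairwise adjacency within each extracted subset, while simultaneously guaranteeing that the skipped colors can still be covered by further cliques (so that together the cliques witness all $x$ colors), is the delicate part. I expect the resolution to hinge on choosing, at each step, the \emph{largest-colored} admissible neighbor and proving that this greedy choice never strands a color — i.e., that any color not yet covered is still forced to appear on a neighbor of some already-selected vertex, which keeps the induction going and enforces the clique property. Making this invariant precise, and showing the extracted sets are both cliques and collectively exhaustive of all $x$ colors, is where the real work lies; the counting bounds on $q$ should then fall out of the invariant almost mechanically.
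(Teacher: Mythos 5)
Your starting point coincides with the paper's: take a vertex of the top color and the witness set $S=\{t_0,\dots,t_{x-1}\}$ of neighbors realizing all smaller colors, then confront the fact that $S$ need not be a clique. From there, however, your proposal defers exactly the content of the theorem: as you yourself write, making the greedy invariant precise, showing the extracted sets are cliques, and showing they jointly cover all colors ``is where the real work lies.'' The missing idea, which is the heart of the paper's proof, is a \emph{second} application of the \ff property, pointing \emph{outside} $S$. Define $V_0'\subseteq S$ as the set of vertices adjacent to \emph{all} lower-colored vertices of $S$; this is a clique containing both $t_0$ and $t_{x-1}$, hence of size at least $2$. For every remaining vertex $u\in S\setminus V_0'$, let $\alpha(u)$ be its highest-colored non-neighbor among the lower-colored vertices of $S$; since \ff refused to give $u$ the color $c(\alpha(u))$, $u$ must have a neighbor $\beta(u)\notin S$ with \emph{exactly} that color. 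The extra cliques are then obtained by grouping the vertices of $S\setminus V_0'$, processed in increasing color order, according to their anchor $\beta(\cdot)$: each group is a clique because every $u$ in it is adjacent to its anchor and to every vertex of $S$ whose color lies strictly between $c(\alpha(u))$ and $c(u)$, and inductively the group contains only such vertices besides the anchor. Your greedy (``largest admissible neighbor, then recurse on gap colors'') contains no mechanism guaranteeing that later cliques can be formed disjointly from earlier ones --- the only \ff-guaranteed neighbors of a gap-color vertex may already have been consumed --- nor that anchors of the right colors exist, and you state no invariant that would supply either.

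The counting route you sketch is, moreover, not merely unfinished but wrong. Since \ff produces a proper coloring, the vertices of any clique receive pairwise distinct colors; hence if your cliques were genuinely color-disjoint, all of $V'$ would have distinct colors, forcing $|V'|\le x$, i.e.\ $q=0$ and a $K_x$ in $G$. Even the weaker claim $q+1\le\lfloor x/2\rfloor$ fails: present $C_5=v_1v_2v_3v_4v_5$ in the order $v_1,v_3,v_2,v_4,v_5$; then \ff uses $x=3$ colors, but $C_5$ is triangle-free, so any partition into non-trivial cliques consists of disjoint edges, giving $q=1=x-2$ and $q+1=2>\lfloor 3/2\rfloor$. (This also contradicts your own earlier, correct, remark that color \emph{reuse} across cliques is what produces the surplus $q$.) In the paper's construction the surplus arises precisely from reuse: each extra clique contains exactly one anchor $\beta$, whose color duplicates a color already present in $S$, which is why $|V'|=|S|+q=x+q$; and the bound $q\le x-2$ comes from a different count altogether: $V_0'$ contains at least two vertices of $S$, and each of the $q$ other cliques contains at least one distinct vertex of $S\setminus V_0'$, so $q\le |S|-2=x-2$.
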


Our second contribution is to develop a $1$-consistent and smooth algorithm for online graph coloring with predictions, called \ffp in Subsection~\ref{sec:ffp}. Consider the setting where the algorithm, upon the reveal of a vertex $v$ also obtains a prediction on the color that $v$ should be colored with in an optimal coloring. We give an algorithm that employs \ff with a distinct color palette for each subgraph induced by the set of vertices that obtained the same prediction. By carefully utilizing the aforementioned structural result, we are able to associate the number of colors used by the algorithm with the number of wrong predictions obtained. More specifically, we are able to show that the number of colors used by the algorithm differs from that of an optimal coloring by at most the number of wrong predictions (implying that if the predictions are perfect, the algorithm actually recovers an optimal coloring, even though the quality of the predictions is not a priori known to the algorithm).

\begin{theorem}
\label{thm:consistency}
Assume that \ffp uses $x(G)$ colors for some online graph with predictions $(G,\pi,P)$ whose chromatic number is unknown to the algorithm, then
$x(G)\leq{\eta(G)+\chi(G)}$.
\end{theorem}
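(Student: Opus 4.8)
The plan is to分析the algorithm \ffp first.The plan is to exploit that \ffp decomposes the instance by predicted color and colors each part independently. Concretely, let $c_1,\dots,c_k$ be the distinct predicted colors occurring in the instance, and for each such $c$ let $G_c$ be the subgraph of $G$ induced by the vertices $v$ with $P(v)=c$. Since \ffp runs \ff on each $G_c$ with its own disjoint palette, the overall coloring is proper and the total number of colors satisfies $x(G)=\sum_c x_c$, where $x_c$ denotes the number of colors \ff uses on $G_c$ (with the vertices of $G_c$ arriving in the order inherited from $\pi$). The whole argument thus reduces to bounding each $x_c$ and summing.

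Next I would fix an optimal coloring $O\in\mathcal{O}(G)$ attaining the minimum in the definition of $\eta$, and call a vertex $v$ \emph{correct} if $P(v)=O(v)$ and \emph{wrong} otherwise; write $w_c$ for the number of wrong vertices in $G_c$, so that $\sum_c w_c=\eta$. The key observation is that inside a single group $G_c$ every correct vertex has optimal color exactly $c$, and therefore the correct vertices of $G_c$ form an independent set in $G$.

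The heart of the proof is the per-group bound $x_c\le w_c+1$ whenever $G_c$ contains a correct vertex, and $x_c\le w_c$ otherwise. For the first case I would invoke Theorem~\ref{thm:main-struct}: if $x_c\ge 2$ it yields a set $V'_c\subseteq V(G_c)$ of size $x_c+q_c$ partitioned into $q_c+1$ non-trivial cliques. Because the correct vertices of $G_c$ form an independent set, each clique contains at most one correct vertex; hence $V'_c$ has at most $q_c+1$ correct vertices and therefore at least $(x_c+q_c)-(q_c+1)=x_c-1$ wrong vertices, giving $w_c\ge x_c-1$. (The cases $x_c\le 1$ are immediate since $w_c\ge 0$.) For a group with no correct vertex every vertex is wrong, so trivially $x_c\le |V(G_c)|=w_c$.

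Finally I would sum these bounds. Letting $C$ be the set of predicted colors whose group contains a correct vertex, we obtain $x(G)=\sum_c x_c\le \sum_c w_c+|C|=\eta+|C|$. To close the argument, note that for every $c\in C$ there is a correct vertex $v$ with $O(v)=P(v)=c$, so color $c$ is actually used by the optimal coloring $O$; since $O$ uses exactly $\chi(G)$ colors, we get $|C|\le\chi(G)$, and hence $x(G)\le\eta+\chi(G)$. I expect the main obstacle to be the per-group bound, because \ff can use far more than $\chi(G_c)$ colors on $G_c$; it is precisely here that the structural result of Theorem~\ref{thm:main-struct}, rather than any naive chromatic-number estimate, is indispensable, as it converts an excess of \ff-colors into a guaranteed supply of wrongly predicted vertices through the clique partition.
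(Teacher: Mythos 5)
Your proposal is correct and follows essentially the same route as the paper: decompose the instance by predicted color, apply Theorem~\ref{thm:main-struct} to each group so that the clique partition forces at least $x_c-1$ wrongly predicted vertices (since a clique of same-prediction vertices contains at most one correct one), and then observe that only groups whose predicted color is actually used by the optimal coloring --- at most $\chi(G)$ of them --- can contribute the additive ``$+1$''. The only (cosmetic) difference is bookkeeping: the paper packages the per-group bound as Lemma~\ref{lem:consistency} and counts palettes with no correct prediction, whereas you phrase the same counting via the set $C$ of palettes containing a correct vertex and the independent-set observation.
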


Our third contribution is a novel framework for combining different online graph coloring algorithms in Subsection~\ref{sec:rffp}. Earlier frameworks developed for other online problems do not seem to carry over to the online graph coloring problem. Our framework allows us to robustify our algorithm by combining it with a classical algorithm that disregards the predictions. We show that the
number of colors used by the combination of the two algorithms is within a factor of $2$ from that of the best performing of the two on this input instance. This directly implies that this combination is a $2$-consistent, smooth and robust algorithm.

Although in this paper we only use the framework to combine two algorithms, we prove the result for combining any number $t$ of online algorithms (at a loss of $t$ in the competitive ratio). Given an online graph $(G,\pi)$ and an online coloring algorithm $A$, let $A(G)$ denote the number of colors $A$ uses for $G$.

\begin{theorem}
\label{thm:combination}
Let $A_1, A_2, \dots , A_t$ be online graph coloring
algorithms that may or may not make use of the predictions. Then, there exists a meta-algorithm $A$ that combines $A_1,A_2, \ldots , A_t$, such that
for any online graph with predictions $(G,\pi, P)$
\begin{align*}
 A(G) \le t\cdot\min_{1\le i \le t} A_i(G).
\end{align*}
\end{theorem}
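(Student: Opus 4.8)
The plan is to construct the meta-algorithm $A$ by running simulations of all of $A_1,\dots,A_t$ in parallel and, for each arriving vertex, committing to the color produced by whichever simulated algorithm has so far been most economical. Since each $A_i$ is deterministic and $A$ observes exactly the same revealed information (the subgraph induced by the current prefix, together with the predictions), it can maintain the full internal state of every $A_i$. Hence, upon the arrival of a vertex $v=v_j$, it knows both the color $c_i(v)$ that $A_i$ would assign to $v$ and the number $m_i(j)$ of distinct colors $A_i$ has used on the prefix $v_1,\dots,v_j$.

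First I would fix the color universe of $A$ to consist of pairs $(i,c)$, where $i\in\{1,\dots,t\}$ is an algorithm index and $c$ is a color from $\mathcal{U}$. The rule is: when $v$ arrives, let $i$ be an index minimizing the current color count $m_i$ (breaking ties by smallest index), and assign to $v$ the meta-color $(i,c_i(v))$. Properness is immediate and is the first thing I would verify: if two adjacent vertices $u$ and $v$ received the same meta-color $(i,c)$, then both were colored by following the same $A_i$ with $c_i(u)=c_i(v)=c$; but $A_i$ produces a proper coloring, so adjacent $u,v$ cannot share a color, a contradiction. Note that this holds regardless of which index the meta-algorithm chooses to follow, so correctness does not depend on the selection rule.

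The core of the argument is bounding the number of distinct meta-colors by $t\cdot k$, where $k=\min_i A_i(G)$. The key observation, which I would state as a lemma, is that at every step $j$ the running minimum satisfies $\min_i m_i(j)\le k$. This holds because prefix color counts are monotonically non-decreasing in $j$: if $i^{*}$ attains the overall minimum $A_{i^{*}}(G)=k$, then $m_{i^{*}}(j)\le m_{i^{*}}(n)=k$ for every $j$, and the running minimum is at most $m_{i^{*}}(j)$. Consequently, whenever $A$ follows index $i$, the count $m_i$ at that moment is the running minimum and so is at most $k$. Grouping the meta-colors by their first coordinate, every meta-color with first coordinate $i$ has the form $(i,c_i(v))$ for some $v$ processed while $m_i\le k$; since $m_i$ is non-decreasing and never exceeds $k$ at such moments, all these colors $c_i(v)$ lie among the at most $k$ distinct colors $A_i$ has ever used at the last such moment. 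Thus each of the $t$ groups contains at most $k$ distinct colors, and summing gives at most $t\cdot k$ meta-colors in total.

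The main obstacle I anticipate is reconciling this with the non-monotonicity issue flagged in the introduction: one cannot naively switch to a fall-back algorithm and assume its color count behaves well. The resolution is that the monotonicity we actually need is along \emph{prefixes}, not suffixes — the number of colors an algorithm has already committed to can only grow as vertices arrive — and it is exactly this prefix-monotonicity that forces the running minimum to stay at or below the final optimum $k$. I would emphasize that following the running-minimum algorithm caps each index's palette at $k$ \emph{without $A$ ever needing to know $k$ in advance}, and that the number of times $A$ switches which algorithm it follows is irrelevant to the final count, since each of the $t$ palettes is bounded independently.
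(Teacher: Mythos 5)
Your proposal is correct and matches the paper's own proof in all essentials: you run all $t$ algorithms in parallel on disjoint color palettes, follow the one whose color count on the current prefix (including the arriving vertex) is minimal, and use prefix-monotonicity of online color counts to bound each palette's contribution by $\min_i A_i(G)$. The only difference is bookkeeping — the paper tracks, for each $A_i$, the last index $d(i)$ at which $A_i$ is at most the eventual minimizer and sums $A_i(G(d(i)))$, whereas you bound each palette directly via the running-minimum lemma — but this is the same argument.
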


The generality of our result allows us to obtain learning augmented algorithms for online graph coloring for several different graph classes in Subsection~\ref{sec:graph-classes}.

\section{A Structural Theorem about \textsc{\textmd{FirstFit}}}
\label{sec:structural}
This section is devoted to proving Theorem~\ref{thm:main-struct}, which establishes a
relationship between the number of colors used by \ff for an online graph $(G,\pi)$ and a partition of
a subset of $V$ into cliques. As mentioned, our proof is constructive and implies an efficient algorithm for finding such a partition.
In the proof we assume that \ff uses $x\ge 2$ colors which are ordered as $c_{0}<c_{1}<\ldots<c_{x-1}$.
We use $N(u)$ to denote the neighbors of a vertex $u$, i.e., the set of vertices that are adjacent to $u$.

\begin{proof}[Proof of Theorem~\ref{thm:main-struct}]
  Let $t_{x-1}\in V$ be a vertex for which \ff uses color $c_{x-1}$. By the definition of \ff vertex $t_{x-1}$ must be adjacent to vertices $t_0,t_1,\ldots, t_{x-2}$, such that $t_i$ is colored with color $c(t_i) = c_i$ for all $i=0,\ldots, x-2$. Let $S = \bigcup_{i=0}^{x-1}\{t_i\}$ and let $N^-(u) = \{w\in N(u)\cap S\ |\ c(w) < c(u)\}$ be its \emph{neighborhood of smaller-colored vertices within $S$}, $\forall u\in S$.

  Note that the set of vertices $V'_0 = \{t_i\in S \ |\  N^-(t_i)= \{t_0,t_1,\dots, t_{i-1}\}\}$ is a clique. Also note that $|V'_0|\ge 2$, since $\{t_{0},t_{x-1}\}\subseteq V'_0$.
  If $V_0'=S$, then the theorem directly follows for $q=0$. So, assume for the remainder of this proof that $V_0' \neq S$. Let $S' =S\setminus V_0'= \{ u_1,u_2,\dots, u_\ell\}$, in which the vertices of $S'$ are ordered by increasing color. We describe an algorithm for partitioning $S'$ into $q$ subsets, with the property that each of these
  subsets of $S'$ together with one distinct vertex from $V\setminus S$ is a clique of size at least two. Note that this implies the theorem since $1 \le |S'| \le x-2$ and thus $1\le q \le x-2$.

  At a high level, the algorithm iterates over the vertices of $S'$ in order of increasing color. And for each such vertex $u_h$, it either identifies a vertex $\beta(u_h)\in V\setminus S$ with $c(\beta(u_h)) < c(u_h)$ such that $(u_h,\beta(h_h))\in E$, and creates a new vertex set $V_h':=\{u_h,\beta(u_h)\}$ which is a clique of size 2, or it adds $u_h$ to a previously created clique $V'_j$ for some $1\le j < h$ to form a larger clique.

  More specifically, for every $h \in
  \{1,...,\ell\}$ let $\alpha(u_h)\in S'$ be a vertex of maximal color with $c(\alpha(u_h))<c(u_h)$ such that $\alpha(u_h)$ is not adjacent to $u_h$, thus $\alpha(u_h)\not\in N^-(u_h)$. And let $\beta(u_h)\in V\setminus S$ be a vertex with $c(\beta(u_h))=c(\alpha(u_h))$ that is adjacent to $u_h$. Note that such vertices must exist: if $\alpha(u_h)$ did not exist, then $u_h$ would be contained in $V_0'$; and if $\beta(u_h)$ did not
  exist, \ff would have assigned a smaller color, namely $c(\alpha(u_h))$, to $u_h$.

  The algorithm proceeds iteratively over the vertices of $S'$ in order of increasing color. For each vertex $u_h\in S'$, if $\beta(u_h)\not\in V_j'$ for all $j$ with $1\le j <h$, then a new clique $V_h'=\{u_h,\beta(u_h)\}$ is created.
  Else, there exists a $j$ with $1\le j < h$ such that $\beta(u_h)\in V_j'$. In this case, set $V_j' := V_j'\cup \{u_h\}$, in other words, add $u_h$ to $V_j'$. We will show that this creates a larger clique. But first note that by the definition of the algorithm each different $\beta(u_h)$
  is added to exactly one such set $V'_j$,  and each such set $V_j'$ contains exactly one specific $\beta(u_h)$. Thus, the output is indeed a partition of $S'\bigcup \cup_{h} \{\beta(u_h) \}$.

  It remains to argue that upon termination, each set $V'_j$ is a clique. We will prove the
  stronger statement that throughout the execution of the algorithm each set $V_j'$ is a clique, and
  consists of $\beta(u_j)$ and a subset of vertices of $S'$ with a color strictly larger than $c(\beta(u_j)) = c(\alpha(u_j))$. This invariant clearly holds upon creation of such a set $V_j'$, since it is created as a clique $\{u_j, \beta(u_j)\}$ and $c(u_j)>c(\beta(u_j))$. Assume that the invariant holds up to some iteration $h-1$. Now consider iteration $h$ during which $u_h$ gets added to $V_j'$. By the definition of the algorithm $\beta(u_h)\in V_j'$, and thus $\beta(u_h)= \beta(u_j)$. And by the definition of $\beta(u_h)$, it is adjacent to $u_h$. It remains to show that $u_h$ is adjacent to all the vertices in $V_j'\setminus\{\beta(u_j)\}$, and that $c(u_h)>c(\beta(u_j))$. The latter directly follows from our ordering and the fact that $c(u_h)>c(u_j)>c(\beta(u_j))$. For the former, recall that $\alpha(u_h)$ is defined as the vertex of $S'$ of maximal color that is not adjacent to $u_h$. In other words, $N^-(u_h)$ contains a vertex of each color strictly between $c(\alpha(u_h))$ and $c(u_h)$, and therefore each vertex of $S'$ with such a color is adjacent to $u_h$. By the induction hypothesis $V_j'$ only contains such vertices (except for the vertex $\beta(u_j)=\beta(u_h)$ whose adjacency to $u_h$ has already been argued).
\end{proof}

\section{Algorithmic Results}
\label{sec:algorithmic-results}

In this section, we focus on deriving and analysing learning augmented algorithms for online graph coloring. We introduce a consistent and smooth algorithm in Subsection~\ref{sec:ffp}, and show how it can be robustified in Subsection~\ref{sec:rffp}.
Finally, in Subsection~\ref{sec:graph-classes} we
argue how it can be used to obtain learning augmented algorithms for online coloring of specific graph classes.

\subsection{\textsc{\textmd{FirstFitPredictions}} (\textsc{\textmd{FFP}})}
\label{sec:ffp}
Throughout this section we assume that the predicted colors are chosen from the set $\{c_0,c_1,c_2,\ldots\}$.
Given an online graph with predictions $(G,\pi,P)$, upon revealing a new vertex $v$ with prediction $P(v) = c_i$, the algorithm \ffp (\ffpshort for short) employs \ff with a distinct color palette associated with $c_i$. We use
$C(i) = \{c_i^0=c_i,c_i^1,c_i^2,\ldots \}$ to denote
the \emph{color palette} associated with color $c_i$,
implying a natural ordering according to the superscripts. Keeping the colors of each such palette distinct enables us to associate the total number of colors used by \ffpshort to the total prediction error.

\begin{tcolorbox}
\underline{\textsc{\textmd{FirstFitPredictions:}}}
When a new vertex $v$ is revealed with prediction $P(v) = c_i$, assign to $v$ the smallest-superscript
eligible color $c\in C(i)$.
\end{tcolorbox}

\ffpshort implies a partition of the vertices of $G$
(and the subgraphs of $G$ induced by the vertices that have been revealed so far)
based on their color palettes.

\begin{definition}
We say that a vertex $v$ belongs to color palette $C(i)$, if it was assigned a color $c\in C(i)$ by \ffpshort (or equivalently, it received the prediction $c_i$). We use $G_i=(V_i,E_i)$ to denote the subgraph of $G$ induced by the set of vertices of color palette $C(i)$.
\end{definition}

Note that an alternative, equivalent description of \ffpshort is that it colors each induced subgraph $G_i$ of $G$ using \ff with color palette $C(i)$. Also note that it is without loss of generality to assume that the color palettes are distinct: every time a specific color is predicted for the first time, one can ``rename" it to a new, unused color (if required) and define the corresponding color palette  accordingly. Finally, note that \ffpshort
does not require any information on the chromatic number $\chi(G)$ of the graph $G$.

We can now relate the number of colors used by \ffpshort in each color palette to the number of prediction errors within that color palette.

\begin{lemma}
\label{lem:consistency}
    Fix an optimal coloring $O\in\mathcal{O}(G)$, let $\eta_i(G)$
    be the number of vertices $v$ of
    color palette $C(i)$ for which $O(v)\neq P(v)$, and let $x_i(G)$ be the number of distinct colors used by \ffpshort for vertices of $C(i)$. Then
    \begin{align*}
        x_i(G) \le \eta_i(G) + 1.
    \end{align*}
\end{lemma}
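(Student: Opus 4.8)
The plan is to reduce the statement to the structural result of Theorem~\ref{thm:main-struct} applied to the single induced subgraph $G_i$. Recall that \ffpshort colors $G_i$ exactly as \ff would using palette $C(i)$, so $x_i(G)$ is precisely the number of colors \ff uses on the online graph $G_i$. If $x_i(G)\le 1$ the claim is immediate since $\eta_i(G)\ge 0$, so I would assume $x_i(G)\ge 2$ and invoke Theorem~\ref{thm:main-struct} on $G_i$.

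First I would apply the theorem to obtain a vertex set $V'\subseteq V_i$ with $|V'| = x_i(G)+q$ for some $0\le q\le x_i(G)-2$, partitioned into $q+1$ cliques $V_0',V_1',\ldots,V_q'$, each of size at least two. The crucial observation is that every vertex in $V'$ belongs to palette $C(i)$ and hence carries the same prediction $P(v)=c_i$. Now fix the optimal coloring $O$ and restrict it to $V'$: since $O$ is a proper coloring of $G$, its restriction to the induced subgraph is still proper, and each $V_j'$ is a clique, so the vertices of each $V_j'$ receive pairwise distinct colors under $O$. In particular, at most one vertex of each clique $V_j'$ can satisfy $O(v)=c_i=P(v)$; every other vertex of that clique is a prediction error.

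Counting errors across the partition then yields the bound. Writing $k_j=|V_j'|\ge 2$, the number of vertices in $V'$ with $O(v)\neq P(v)$ is at least
\begin{align*}
\sum_{j=0}^{q}(k_j-1) = |V'| - (q+1) = \bigl(x_i(G)+q\bigr)-(q+1) = x_i(G)-1.
\end{align*}
Since all of these are vertices of palette $C(i)$, they contribute to $\eta_i(G)$, giving $\eta_i(G)\ge x_i(G)-1$, which is exactly the claimed inequality $x_i(G)\le \eta_i(G)+1$.

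I expect the only genuinely delicate point to be the clean interface between the global optimal coloring $O$ and the local cliques supplied by the structural theorem: one must note that the restriction of a proper coloring to an induced subgraph remains proper, so the distinct-colors-on-a-clique property holds, and that \emph{all} predictions inside a palette coincide with $c_i$, which is what forces at most one correct prediction per clique. Everything else is the combinatorial bookkeeping of summing $k_j-1$ over the $q+1$ cliques, which telescopes against the size and count of the partition.
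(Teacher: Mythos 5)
Your proof is correct and follows essentially the same route as the paper's: apply Theorem~\ref{thm:main-struct} to $G_i$ (using that \ffpshort restricted to palette $C(i)$ behaves exactly like \ff on $G_i$), note that all vertices in the resulting cliques share the prediction $c_i$ so at most one per clique can be correct under the proper coloring $O$, and telescope $\sum_j(|V_j'|-1)=x_i(G)-1$. Your only additions — explicitly dispatching the trivial case $x_i(G)\le 1$ and spelling out that properness of $O$ on a clique forces distinct colors — are fine points the paper leaves implicit.
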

\begin{proof}
    By the definition of \ffpshort, the set of vertices of color palette $C(i)$ is exactly the set of vertices that received the prediction $c_i$, and therefore exactly the set of vertices of $G_i$. Furthermore, the color assigned to each vertex of $G_i$ is the same as it would obtain, if $G_i$ was given as input to \ff. Therefore, by Theorem~\ref{thm:main-struct}
    a subset of the vertices of $G_i$ can be partitioned into $q+1$ cliques of size at least two, for some integer $q$ with $0 \le q\le x_i(G)-2$. Since all vertices of each such clique received the same prediction, at most one of them could have obtained a correct prediction.

    Let $V_0',V_1',\dots, V_q'$ be the partition of $V'\subseteq V_i$ implied by applying
    Theorem~\ref{thm:main-struct} on $G_i$. Recall that $|V'|=x_i(G)+q$. Since each such $V'_j$ is a clique in $G_i$ and contains vertices that have all obtained the same prediction, at least $|V_j'|-1$ vertices of $V_j'$ must have obtained a wrong prediction, for all $j\in\{0,1,\ldots,q\}$. Summing up over all such disjoint sets we obtain:
    \begin{align*}
        \eta_i(G) \ge \sum_{j=0}^{q} \left( |V_j'|-1 \right) = \sum_{j=0}^{q}|V_j'| - (q+1)  = |V'| - q - 1 = x_i(G) -1,
    \end{align*}
   confirming the statement of the lemma.
\end{proof}

Given Lemma~\ref{lem:consistency}, we are now ready to prove Theorem~\ref{thm:consistency}.

\begin{proof}[Proof of Theorem~\ref{thm:consistency}]
 Let $\chi(G) = k$ and let $c_0,c_1,\dots, c_\ell$ be the distinct predictions assigned by $P$. Note that neither $k$ nor $\ell$ are a priori known to the algorithm (they are only used for the sake of analysis).

 Obviously, $x(G)=\sum_{i=0}^\ell x_i(G)$, where $x_i(G)$ is defined as in Lemma~\ref{lem:consistency}. Note that if $\ell+1 > k$, then there exist at least $\ell+1 -k$ color palettes in which no vertex received a correct prediction; in other words, color palettes for which $ x_i(G)=\eta_i(G)$. Furthermore, by Lemma~\ref{lem:consistency}, for each color palette we have $x_i(G) \leq{\eta_i(G)+1}$, independently of the relationship between $\ell$ and $k$. So, overall at most $k$ color palettes will contribute the additive term ``$+1$" to $x(G)$, and thus we have that $x(G) \le \eta(G) + k = \eta(G) + \chi(G)$.
\end{proof}

Theorem~\ref{thm:consistency} shows that \ffpshort never uses more than $\eta(G)+\chi(G)$ colors for an online graph $G$ with predictions. The next result shows that there exist graphs for
which this amount of colors may indeed be used.

\begin{lemma}
\label{lem:ffp-lb}
    For every integer $k\ge 2$, there exists an online graph with predictions $(G,\pi,P)$ and $\chi(G)=k$ for which \ffpshort uses $x(G) = \eta(G)+k$ colors.
\end{lemma}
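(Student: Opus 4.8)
The plan is to construct, for every $k \ge 2$, an online graph with predictions that forces \ffpshort to realize the worst case of Theorem~\ref{thm:consistency} with equality. The key idea is to design the instance so that the inequality in Lemma~\ref{lem:consistency} is tight within each color palette, and so that the counting argument in the proof of Theorem~\ref{thm:consistency} loses nothing. Concretely, I would build $G$ as a graph with $\chi(G) = k$, together with a prediction function $P$ and an arrival order $\pi$, chosen so that (i) \ff, when run on each induced subgraph $G_i$ with its own palette, is driven to use many colors via the classic bad permutation, and (ii) every extra color beyond the first in each palette corresponds to exactly one wrong prediction, contributing $+1$ to both $x(G)$ and $\eta(G)$ in lockstep.

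First I would fix the target chromatic number $k$ and decide how the prediction error $\eta(G)$ should be distributed across palettes. The cleanest route is to exhibit a single palette (say $C(0)$) on which \ff is forced into its worst-case behavior. Recall the standard example: a complete bipartite graph $K_{n,n}$ minus a perfect matching has $\chi = 2$, yet there is an arrival order making \ff use $n$ colors. I would take such a gadget, assign every one of its vertices the \emph{same} prediction $c_0$, and reveal the vertices in the bad order. Since all these vertices share one prediction, any two of them that are forced to receive distinct \ff-colors witness a wrong prediction (in an optimal $2$-coloring at most one color value can be the ``correct'' prediction for them). This is precisely where Theorem~\ref{thm:main-struct} and Lemma~\ref{lem:consistency} become tight: the cliques extracted from $G_0$ each charge $|V_j'|-1$ wrong predictions, and I must ensure these charges sum to exactly $x_0(G) - 1$ with no slack. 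I would then augment this gadget so that $\chi(G) = k$ rather than $2$ — for instance by attaching a clique $K_{k-2}$ joined completely to the gadget, or by layering $k-1$ independent sets appropriately — while keeping the palette structure under control so that the contribution of the added part to $x(G)$ and to $\chi(G)$ is accounted for exactly.

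After the construction, I would compute $x(G)$, $\chi(G)$, and $\eta(G)$ directly and verify the identity $x(G) = \eta(G) + k$. The verification splits into showing $x(G) = \eta(G) + k$ from above (which is already guaranteed by Theorem~\ref{thm:consistency}, so it suffices to prove $x(G) \ge \eta(G) + k$) and confirming $\chi(G) = k$. For the latter I would exhibit an explicit proper $k$-coloring and a $k$-clique (or otherwise argue the lower bound on $\chi$). For the former, I would count the colors \ffpshort actually assigns on the chosen order and show $\eta(G)$ is small enough that the bound is met with equality — equivalently, that the optimal coloring $O$ minimizing the error cannot ``rescue'' more than one vertex per forced color.

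The main obstacle I anticipate is controlling $\eta(G)$ tightly, because $\eta$ is defined as a \emph{minimum} over all optimal colorings $O \in \mathcal{O}(G)$. It is not enough that the predictions disagree with \emph{one} optimal coloring; I must argue that \emph{every} optimal coloring disagrees with the predictions on the required number of vertices, so that the minimizing $O$ still leaves exactly the right count of errors. This means the gadget must be rigid enough that its optimal colorings are essentially unique (up to a harmless relabeling that the prediction can match on at most one vertex per forced clique). Ensuring this rigidity — so that no clever optimal recoloring can simultaneously agree with the single shared prediction on two vertices of the same extracted clique — is the crux, and I would handle it by choosing a gadget whose optimal coloring is forced on the relevant vertices.
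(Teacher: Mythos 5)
Your plan has a genuine gap, and it is structural rather than a matter of missing detail. First, a construction whose ``hard part'' lives in a single palette can never witness the bound for $k\ge 2$: if all gadget vertices share the prediction $c_0$, then Lemma~\ref{lem:consistency}, applied to the minimizing optimal coloring, caps that palette's contribution at $x_0\le \eta_0+1$, and every further palette you add can raise $x(G)-\eta(G)$ by at most one more. Hence reaching $x(G)=\eta(G)+k$ requires at least $k$ palettes, \emph{each} individually tight; your augmentation by a completely joined $K_{k-2}$ with correctly predicted vertices supplies only $k-2$ additional tight palettes, so at best you get $x(G)=\eta(G)+k-2$. Second, and worse, your chosen gadget is not even tight within its own palette. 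For $K_{n,n}$ minus a perfect matching (with $n\ge 3$) the only optimal colorings are the two side-colorings, so with all predictions equal to $c_0$ the minimizing optimal coloring colors one side $c_0$ and makes all $n$ vertices of the other side wrong: this gives $\eta_0=n=x_0$, losing the ``$+1$''. Concretely, your claim that any two vertices forced to receive distinct \ff colors witness a wrong prediction is false: in the bad arrival order, $x_1$ and $x_2$ lie on the same side of the bipartition, receive distinct \ff colors, yet can both be correctly predicted $c_0$. Wrong predictions are witnessed only by \emph{cliques} of same-predicted vertices (this is exactly how Lemma~\ref{lem:consistency} is proved via Theorem~\ref{thm:main-struct}), whereas the many colors \ff uses on this gadget are forced through non-edges; that mismatch is fatal, and no embedding of this gadget into a larger graph repairs it, since any proper coloring assigns $c_0$ to an independent set of the gadget, of size at most $n$.

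The paper's construction resolves both issues simultaneously and is much simpler than your gadget: take $k$ disjoint copies $G_1,\dots,G_k$ of $K_k$, add edges from one representative $v_1\in G_1$ to a representative $v_i$ of each other copy (this keeps the example connected while preserving $\chi(G)=k$, since colors can be permuted within each copy), and predict color $c_i$ for every vertex of $G_i$. Each palette then contains a $k$-clique, so \ffpshort uses exactly $k$ colors per palette ($x_i=k$), while in each copy at most one---and, for a suitable optimal coloring, exactly one---prediction is correct ($\eta_i=k-1$). Summing over the $k$ palettes gives $x(G)=k^2=\eta(G)+k$, with every palette tight. If you want to salvage your plan, the fix is precisely this: replace the bipartite gadget by cliques, one per palette, because per-palette tightness demands that the colors \ff is forced to spend correspond to pairwise adjacent vertices sharing a prediction.
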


\begin{proof}
Considering a fixed integer $k\ge 2$, we will construct a graph $G$ by appropriately connecting $k$ copies of a complete graph on $k$ vertices. Let $G_1,G_2,\ldots, G_k$ be $k$ disjoint copies of a $K_k$, and let $v_i$ be an arbitrary vertex of $G_i$ for $i=1,\dots, k$. Graph $G$ is obtained by adding an edge from $v_1$ to $v_i$ for $i=2,3,\dots, k$. It is easy to check that
$\chi(G)=k$ by using the fact that $\chi(G_i)=k$, and observing that a proper $k$-coloring of $G$ can be obtained by applying suitable permutations of the colors $1,2,\dots, k$ to the copies of the $K_k$.

Fix an arbitrary arrival permutation $\pi$ of the vertices of $G$. Furthermore, assume that every vertex of $G_i$ comes with a prediction of color $c_i$. Since each $G_i$ is a complete
graph on $k$ vertices, exactly one prediction in each such subgraph is correct. So, we have
\begin{align*}
    \eta(G) = k\cdot(k-1).
\end{align*}

On the other hand, \ffpshort will use a different color palette for each $G_i$. Since
each vertex set $V(G_i)$ is a clique of size $k$, \ffpshort will use exactly $k$ colors in each color palette. So, overall
\begin{align*}
 x(G)=k^2=\eta(G)+k.
\end{align*}
\end{proof}

Note that the above construction can easily be extended to having multiple copies of each graph $G_i$, so that $G$ becomes arbitrarily large, while still yielding the same result.

Theorem~\ref{thm:consistency} and Lemma~\ref{lem:ffp-lb} directly imply the following result.

\begin{theorem}
\label{thm:ffp-exact-analysis}
The competitive ratio of \ffpshort is $1 + \frac{\eta(G)}{\chi(G)}$.
\end{theorem}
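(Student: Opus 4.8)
The final statement, Theorem~\ref{thm:ffp-exact-analysis}, asserts that the competitive ratio of \ffpshort equals $1 + \frac{\eta(G)}{\chi(G)}$. The plan is to derive this directly from the two results immediately preceding it, namely the upper bound in Theorem~\ref{thm:consistency} and the matching lower bound in Lemma~\ref{lem:ffp-lb}. Recall that the competitive ratio over a class of graphs is defined as the supremum of $\frac{\chi_A(G)}{\chi(G)}$; here, since the quantity $1 + \frac{\eta(G)}{\chi(G)}$ is stated as a function of the specific instance, the claim is really that for each online graph with predictions $(G,\pi,P)$ the ratio $\frac{x(G)}{\chi(G)}$ is bounded above by $1 + \frac{\eta(G)}{\chi(G)}$, and that this bound is attained, so the ratio cannot in general be improved.

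First I would establish the upper bound. By Theorem~\ref{thm:consistency} we have $x(G) \le \eta(G) + \chi(G)$. Dividing both sides by $\chi(G)$ (which is at least $1$, and indeed at least $2$ in the regime of interest) immediately yields
\begin{align*}
  \frac{x(G)}{\chi(G)} \le \frac{\eta(G) + \chi(G)}{\chi(G)} = 1 + \frac{\eta(G)}{\chi(G)}.
\end{align*}
This shows that no instance forces a competitive ratio exceeding $1 + \frac{\eta(G)}{\chi(G)}$.

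Next I would argue that this bound is tight, so that the ratio is genuinely $1 + \frac{\eta(G)}{\chi(G)}$ rather than merely bounded by it. For this I would invoke Lemma~\ref{lem:ffp-lb}: for every integer $k \ge 2$ there is an online graph with predictions having $\chi(G) = k$ on which \ffpshort uses exactly $x(G) = \eta(G) + k$ colors. On such an instance the inequality above is met with equality, giving $\frac{x(G)}{\chi(G)} = 1 + \frac{\eta(G)}{\chi(G)}$, so the upper bound is achieved and cannot be lowered for that instance. Together with the extension noted after the lemma's proof (multiple copies of each $G_i$ making $G$ arbitrarily large), this confirms the value is exact across the relevant range of parameters.

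The main subtlety to address carefully is interpretive rather than computational: one must be precise about what ``competitive ratio'' means when the stated value depends on $\eta(G)$, which itself varies with the instance and the prediction quality. The honest reading is that \ffpshort attains the instance-wise ratio $1 + \frac{\eta(G)}{\chi(G)}$, matching the consistency bound exactly and thereby certifying both that the additive guarantee $\eta(G) + \chi(G)$ of Theorem~\ref{thm:consistency} is unimprovable and that \ffpshort is $1$-consistent (setting $\eta(G) = 0$ recovers ratio $1$) while degrading smoothly and linearly in the prediction error. I would make this interpretation explicit in one sentence so the reader does not mistake the expression for a worst-case constant over all graphs, and then the proof reduces to the two short steps above with no heavy calculation required.
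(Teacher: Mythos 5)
Your proposal is correct and matches the paper's approach exactly: the paper derives Theorem~\ref{thm:ffp-exact-analysis} as a direct consequence of the upper bound in Theorem~\ref{thm:consistency} and the tight instance of Lemma~\ref{lem:ffp-lb}, which is precisely your two-step argument. Your added remark on interpreting the instance-dependent ratio $1+\frac{\eta(G)}{\chi(G)}$ is a reasonable clarification but does not constitute a different route.
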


Theorem~\ref{thm:ffp-exact-analysis} directly implies the $1$-consistency and smoothness of algorithm \ffpshort: if $\eta(G)=0$, then \ffpshort produces a $\chi(G)$-coloring and is therefore optimal; furthermore the number of assigned colors by \ffpshort grows linearly with the prediction error. Nevertheless, \ffpshort is not a robust algorithm. Indeed, for example, suppose we are given a bipartite online graph of order $n$ with predictions $(G,\pi,P)$ such that $\eta(G)=\Theta(n)$. Then \ffpshort would use $\Theta(n)$ colors. But there exist classical online algorithms (without predictions) \cite{[2],[6],Kierstead96} that can color any bipartite graph with $O(\log n)$ colors. In the next section, we present how \ffpshort can be robustified by elegantly combining it with a classical algorithm.

\subsection{\textsc{\textmd{RobustFirstFitPredictions }}(\textsc{\textmd{RFFP}})}
\label{sec:rffp}
A common approach for robustifying a consistent algorithm is to appropriately combine it with a classical algorithm that disregards the predictions. A first such attempt for robustifying \ffpshort would be to switch to some classical online coloring algorithm $A$, once the number of colors used by \ffpshort becomes larger than some predetermined threshold $T$. Recall that $\chi_A(G)$ denotes the number of colors that algorithm $A$ uses for an online graph with predictions $(G,\pi, P)$, where $\pi=v_1,v_2,v_3,\ldots, v_n$. Furthermore, assume that \ffpshort would for the first time use  $T+1$ colors upon arrival of some vertex $v_i$. Then, by switching to a classical online coloring algorithm $A$ (using the same set of colors) for the restriction of $\pi$ to the \emph{suffix-subgraph} $G'$ induced by $\{v_i,v_{i+1},\ldots, v_n\}$, the total number of colors used by the combined algorithm would be at most $T+\chi_A(G')$. Similarly to the deterministic combination result for problems with an underlying
metric~\cite{AntoniadisCE0S20}, this would already give a robust algorithm, if we can assume
that $A$ is weakly monotone in the following sense.

\begin{definition}
Let $A(G,\pi)$ be the number of colors $A$ uses for $(G,\pi)$, where $\pi = v_1, v_2,\dots, v_n$. Let $\pi(i)$ be the suffix $v_i,v_{i+1},\dots, v_n$ of $\pi$, and let $(G(i),\pi(i))$ be the online subgraph of $(G,\pi)$ induced by the vertices in $\pi(i)$. We say that $A$ is \emph{weakly monotone} (resp. \emph{monotone}) if for any $i$, $A(G(i),\pi(i))\leq c\cdot {A(G,\pi)}$ for some constant $c$ (resp. for $c=1$).
\end{definition}

Unfortunately, and perhaps somewhat surprisingly, we are not aware of any weakly monotone
online graph coloring algorithm with a non-trivial guarantee on the number of used colors.
To give some intuition, in Appendix~\ref{app:weak-monotonicity} we present instances showing that both \ff and \bcm~(See \cite{[7]} and~\cite{[2]}) are not weakly monotone, even on specific classes of bipartite graphs  which are known to admit online competitive coloring algorithms.

We are able to circumvent this issue related to the non-monotonicity by reserving a distinct color palette for each algorithm during the combination. This, however, has the side-effect that after switching to an algorithm $A$ in some round $r$, it is possible that upon arrival of a vertex $v$ the algorithm $A$ itself does not increase its number of used colors (by using a color that was already employed before round $r$), but the combined algorithm does. This rules out an expert-setting approach for combining the algorithms (See~\cite{AntoniadisCE0S20,BlumB00} for more information), but we are still able to bound the total number of colors used by the combined algorithm. More specifically, we are able to combine \ffpshort with $t-1$ classical algorithms and obtain an algorithm \emph{\rffp (\rffpshort for short)} which uses a number of colors bounded from above by the expression in Theorem~\ref{thm:combination}.

\begin{proof}[Proof of Theorem~\ref{thm:combination}]
For $1\le i\le n$, let $G(i)$ denote the (online) graph induced by $\{v_1,v_2,\dots, v_i\}$. For any algorithm $B$, let $c(B,i)$ be the color that $B$ assigns to vertex $v_i$.

In the following, we restrict each of the algorithms $A_1,A_2,\ldots, A_t$ to use its own distinct color palette, where we assume a total ordering of the colors within each palette.
Meta-algorithm $A$ is defined as the algorithm that upon arrival of vertex $v_i$ (and the accompanying prediction $P(v_i)$) colors it with color $c(\text{ALG}_i,i)$, where $\text{ALG}_i\in\{A_1,A_2,\ldots, A_t\}$ is an algorithm realizing $\min_{1\le j\le t} A_j(G(i))$.

Note that since each $A_i$ produces a proper coloring and uses a distinct color palette, the resulting coloring after applying $A$ is proper as well. It remains to argue about the number of colors it would use for $G$.

Let \text{ALG} be $\text{ALG}_n$, and for any algorithm $A_i$, let $d(i)$ be the maximal index such that  $A_i(G(d(i))) \le \text{ALG}(G(d(i)))$. Note that by the definition of $A$, it will not use any color from $A_i$'s color palette on vertices $v_{i+1},v_{i+2},\dots, v_n$. Thus, $A$ uses at most $A_i(G(d(i))$ colors from $A_i$'s color palette. Overall, this gives

\begin{align*}
A(G)\leq \sum_{i=1}^t A_i(G(d(i))).
\end{align*}

By the definition of $d(i)$, the above is at most
\begin{align*}
    \sum_{i=1}^t \text{ALG}(G(d(i))).
\end{align*}

Since an online algorithm cannot alter any assigned color, $\text{ALG}(G(j)) \le \text{ALG}(G(j+1))$ for all $j\in\{1,2,\ldots,n-1\}$. This implies $\text{ALG}(G(d(i))) \le \text{ALG}(G)$, which concludes the proof.
\end{proof}

Lemma~\ref{lem:lower bound} in Appendix~\ref{AppB} shows that the result is tight for this meta-algorithm $A$.

In the previous result, we have been combining deterministic algorithms. However,
Theorem~\ref{thm:combination} easily extends to randomized algorithms as well, assuming that one can simulate the execution of all algorithms simultaneously. The following directly follows from Theorem~\ref{thm:combination}, Jensen's inequality and the concavity of the minimum function.

\begin{corollary}
\label{cor:rand-combination}
Let $A_1, A_2, \ldots, A_t$ be randomized algorithms for online graph coloring that may or may not make use of the predictions, and assume that one can simulate the execution of these algorithms simultaneously. Then, there exists a (randomized) meta-algorithm $A$ that combines $A_1,A_2,\ldots, A_t$ and for any online graph $(G,\pi)$
\begin{align*}
 \mathbb{E}(A(G))\le t\cdot \min_{1 \le i \le t}\mathbb{E}(A_i(G)).
\end{align*}
\end{corollary}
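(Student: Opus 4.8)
The plan is to derive Corollary~\ref{cor:rand-combination} directly from Theorem~\ref{thm:combination} by a conditioning argument combined with the two facts named in the statement: Jensen's inequality and the concavity of the minimum function. First I would fix the randomized meta-algorithm $A$ to be exactly the one constructed in the proof of Theorem~\ref{thm:combination}, only now the constituent algorithms $A_1,\dots,A_t$ are randomized. The key modelling assumption, explicitly granted in the hypothesis, is that one can simulate all $t$ algorithms \emph{simultaneously}; this lets me couple all executions on a single shared probability space, so that for each realization of the internal randomness the deterministic guarantee $A(G)\le t\cdot\min_{1\le i\le t}A_i(G)$ from Theorem~\ref{thm:combination} holds pointwise (sample-path-wise). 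Since each $A_i$ still uses its own distinct color palette and each still produces a proper coloring for every fixing of the randomness, the pointwise argument of the previous proof goes through unchanged on each sample path.

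Next I would take expectations of the pathwise inequality. By monotonicity of expectation,
\begin{align*}
\mathbb{E}(A(G)) \le t\cdot \mathbb{E}\left(\min_{1\le i\le t} A_i(G)\right).
\end{align*}
The remaining task is to replace $\mathbb{E}(\min_i A_i(G))$ by $\min_i \mathbb{E}(A_i(G))$ in the right direction. Here the minimum function $m(x_1,\dots,x_t)=\min_i x_i$ is concave, so by (the multivariate form of) Jensen's inequality applied to the joint distribution of $(A_1(G),\dots,A_t(G))$ we get
\begin{align*}
\mathbb{E}\left(\min_{1\le i\le t} A_i(G)\right) \le \min_{1\le i\le t}\mathbb{E}(A_i(G)),
\end{align*}
where the final step uses that $\min_i \mathbb{E}(A_i(G)) = m(\mathbb{E}(A_1(G)),\dots,\mathbb{E}(A_t(G)))$. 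Chaining the two displayed inequalities yields $\mathbb{E}(A(G)) \le t\cdot \min_{1\le i\le t}\mathbb{E}(A_i(G))$, which is the claim.

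I expect the only genuinely delicate point to be the simultaneous-simulation / coupling step, rather than the analytic inequalities, which are standard. The subtlety is that the meta-algorithm must at each arrival $v_i$ choose the currently best algorithm $\text{ALG}_i$ realizing $\min_j A_j(G(i))$, and for this choice to be well defined in the randomized setting one needs all $A_j$ to be run on a common source of randomness so their color counts $A_j(G(i))$ are jointly observable at runtime; the hypothesis ``one can simulate the execution of these algorithms simultaneously'' is precisely what licenses this and makes the pointwise bound meaningful before taking expectations. I would state explicitly that the pathwise bound holds for every realization, so that no independence assumption among the $A_j$ is needed and Jensen is applied only to the (arbitrary) joint law. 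Everything else is a routine transcription of the deterministic proof.
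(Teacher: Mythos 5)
Your proof is correct and is essentially the paper's own argument: the paper gives no detailed proof, stating only that the corollary ``directly follows from Theorem~\ref{thm:combination}, Jensen's inequality and the concavity of the minimum function,'' which is exactly the pathwise-bound-then-Jensen chain you spell out. Your explicit treatment of the simultaneous-simulation/coupling step is a faithful elaboration of that same route, not a different approach.
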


Theorem~\ref{thm:combination} implies that we can combine \ffpshort with a $c$-competitive classical algorithm (perhaps on a specific class of input graphs) and obtain a $2\min\{1+\frac{\eta(G)}{\chi(G)},c\}$-competitive algorithm. Such an algorithm attains a consistency of $2$, and is at the same time smooth (the number of used colors grows linearly with the prediction error) and robust (it is $2c$-competitive, independently of the prediction quality).

Assume that the (learning augmented) algorithm has knowledge that the input graph belongs to a specific class of graphs such that (i) all graphs of this class have chromatic number $k$ and (ii) there exists a classical  online algorithm that is online competitive on this class, with function $f(\cdot)$. In that specific case, a slight adaptation in the proof of Theorem~\ref{thm:combination} even gives us a $1$-consistent algorithm that is at the same time robust.

\begin{corollary}
    \label{cor:2combination}
    For some fixed $k$, assume that the algorithm \ffpshort is aware that the input graph belongs to a class $\mathcal{C}$ of graphs such that $\chi(G)=k$
    for all
    $G\in\mathcal{C}$.
    Moreover, assume that
    a classical online algorithm $A_1$ is
    known
    for all graphs of class $\mathcal{C}$.
    Then, there exists a meta-algorithm $A'$ that combines \ffpshort with $A_1$, and for any online graph $(G,\pi,P)\in\mathcal{C}$,
    \begin{align*}
        {A'}(G) &\le 3\min\{\ffpshort(G), A_1(G)\} \text{ if $\eta(G)>0$}, \text{ and}\\
        {A'}(G) &= k \text{ otherwise.}
    \end{align*}
\end{corollary}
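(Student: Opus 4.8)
The plan is to design $A'$ as a two-phase algorithm that exploits the knowledge of $k$. In the first, \emph{optimistic} phase I would simply follow \ffpshort (coloring with its own palette $P_0$) as long as \ffpshort has used at most $k$ colors on the prefix revealed so far. The moment \ffpshort is about to use a $(k+1)$-st color --- say upon arrival of $v_r$ --- I switch permanently to the \emph{robust} phase. For the robust phase I would run the meta-algorithm $M$ of Theorem~\ref{thm:combination} combining \ffpshort and $A_1$ (on the full online graph, with its own pair of palettes, all disjoint from $P_0$), and output $M$'s color for every vertex $v_i$ with $i\ge r$. Crucially, $M$ operates on full prefixes $G(i)$ rather than on a suffix, which is exactly what lets us sidestep the non-monotonicity obstacle discussed before the definition of weak monotonicity: we never run a classical algorithm ``from scratch'' on a suffix.

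First I would check that $A'$ produces a proper coloring. The two phases use disjoint palettes, so a Phase-1 vertex and a Phase-2 vertex can never clash. Within Phase 1 the coloring is exactly that of \ffpshort, and within Phase 2 it is the restriction to the Phase-2 vertices of the proper coloring produced by $M$; both are proper. Next I would handle the case $\eta(G)=0$. By Theorem~\ref{thm:consistency} applied to the whole graph, \ffpshort uses exactly $\chi(G)=k$ colors (at most $k$ by the theorem, at least $k$ since it is a proper coloring). Since the number of colors \ffpshort has used on $G(i)$ is nondecreasing in $i$ and never exceeds $k$, the switch is never triggered, $A'$ stays in Phase 1 throughout, and it reproduces \ffpshort's coloring. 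Hence $A'(G)=k$, which gives $1$-consistency.

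Finally, the case $\eta(G)>0$. Here Phase 1 contributes at most $k$ colors from $P_0$, since by definition it stops before \ffpshort would exceed $k$ colors, while Phase 2 contributes at most the \emph{total} number of colors used by $M$, which by Theorem~\ref{thm:combination} (with $t=2$) is at most $2\min\{\ffpshort(G),A_1(G)\}$. Writing $m=\min\{\ffpshort(G),A_1(G)\}$, this yields $A'(G)\le k+2m$. The quantitative heart of the argument is the observation that $m\ge k$: both \ffpshort and $A_1$ output proper colorings of $G$, and any proper coloring uses at least $\chi(G)=k$ colors, so each of $\ffpshort(G)$ and $A_1(G)$ is at least $k$. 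Substituting $k\le m$ gives $A'(G)\le k+2m\le 3m=3\min\{\ffpshort(G),A_1(G)\}$, as required.

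The step I expect to be the main obstacle is arranging the robust phase so that the $\min$ (and not merely $A_1(G)$) appears in the bound, while still avoiding the non-monotonicity trap. Running $A_1$ alone on the suffix is both unsafe, because of non-monotonicity, and too weak, because it would only bound the count by $A_1(G)$ rather than by the minimum. Routing the robust phase through the prefix-based combination $M$ of Theorem~\ref{thm:combination} resolves both issues at once; the remaining, essentially bookkeeping, subtlety is to confirm that restricting $M$'s output to the Phase-2 vertices still leaves a proper coloring and that the number of distinct colors it uses is bounded by $M(G)$, which holds because an online algorithm never un-uses a color.
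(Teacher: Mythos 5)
Your proposal is correct and follows essentially the same route as the paper's proof: an optimistic first phase guaranteed to use at most $k$ colors, then a permanent switch to the Theorem~\ref{thm:combination} meta-algorithm simulated on full prefixes (thus sidestepping non-monotonicity) with disjoint palettes, and the accounting $A'(G)\le k+2\min\{\ffpshort(G),A_1(G)\}\le 3\min\{\ffpshort(G),A_1(G)\}$ via the observation that both algorithms produce proper colorings and hence use at least $k$ colors. The only immaterial difference is the first-phase rule---the paper follows the raw predictions and switches once a $(k+1)$-st distinct color is predicted or following a prediction would violate properness, whereas you follow \ffpshort and switch when it would need a $(k+1)$-st color---and in both variants the $\eta(G)=0$ analysis shows the switch never triggers, yielding $A'(G)=k$.
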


\begin{proof}
    Let $A'$ be the meta-algorithm that colors each vertex $v$ with $P(v)$, until upon the reveal of some $v_i$, either $k+1$ distinct colors are predicted so far, or coloring $v_i$ with $P(v_i)$ would result in an improper coloring. For the remaining sequence $\pi(i)$, $A'$ is defined precisely as $A$ in the proof of Theorem~\ref{thm:combination}. If $\eta(G)=0$, then $A'$ will follow the predictions on the whole input graph, and thus produce a proper $k$-coloring. Otherwise, by applying Theorem~\ref{thm:combination} on \ffpshort and $A_1$, we obtain that $A'$ uses at most $2\min\{\ffpshort(G), A_1(G)\}$ colors on the sequence $\pi(i)$. Combining this with the facts that $A'$ uses at most $k$ distinct colors on $v_1,v_2,\dots, v_{i-1}$ and that, since \ffpshort and $A_1$ produce proper colorings, both $\ffpshort(G) \ge k$ and $ A_1(G) \ge k$, we obtain the required conclusion.
\end{proof}

\subsection{Results for Specific Classes of Graphs}
\label{sec:graph-classes}

Given that the input graph belongs to a specific graph class (and this is known to the algorithm
a priori), we can obtain more refined results. In this section, we review some interesting cases for which classical (deterministic) online algorithms are known.

\subsubsection{Bipartite Graphs}
Although the chromatic number of bipartite graphs is $2$, for any deterministic online coloring algorithm $A$, there exists an online bipartite graph on $n$ vertices that forces $A$  to use at least $2\log n -10$ colors~\cite{[5]}. This result is essentially tight, given that there exists a simple online coloring algorithm guaranteeing a coloring with at most $2\log n +1$ colors on any bipartite graph on $n$ vertices~\cite{[6]}. This means that there is no competitive algorithm for bipartite graphs. However, there are (online) competitive algorithms for specific subclasses of bipartite graphs. Namely, for $P_4$-free bipartite graphs \ff is known to be optimal~\cite{[1]} whereas for $P_5$-free bipartite graphs it uses at most three colors~\cite{[11]}. Online algorithm \bcm guarantees a coloring with four colors on any  $P_5$-free bipartite graph~\cite{[2]}. The problem becomes significantly more difficult on $P_6$-free bipartite graphs where, as mentioned, no online algorithm can guarantee a coloring with constantly many colors. However, \bcm is online competitive for such graphs, and  $\chi_{\bcm}(G)\le 2\chi_{OL}(G)-1$ for any  $P_6$-free bipartite graph~\cite{[2],[10],[3]}. For $P_7$-free, $P_8$-free and $P_9$-free bipartite graphs, an algorithm that builds upon \bcm is known to be online competitive and uses at most $4\chi_{OL}(G)-1$, $3(\chi_{OL}+1)^2$ and $6(\chi_{OL}(G)+1)^2$ colors, respectively~\cite{[10],[3]}. Applying Corollary~\ref{cor:2combination} to \ffpshort and an algorithm of the respective graph class gives the following theorem.

\begin{theorem}
    There exist (different) algorithms for online coloring bipartite graphs with predictions obtaining a competitive ratio of $1$ if $\eta(G) = 0$, and $3\min\{\frac{\eta(G)}{2}+1,X\}$ otherwise,  where $X$ is
    \begin{itemize}[topsep=0pt,parsep=0pt,partopsep=0pt]
        \item $\Theta(\log n)$ for general bipartite graphs,
        \item  $\chi_{OL}(G)-\frac{1}{2}$ for bipartite $P_6$-free graphs,
        \item  $2\chi_{OL}(G)-\frac{1}{2}$ for bipartite $P_7$-free graphs,
        \item $\frac{3(\chi_{OL}(G)+1)^2}{2}$ for bipartite $P_8$-free graphs, and
        \item $3(\chi_{OL}(G)+1)^2$ for bipartite $P_9$-free graphs.
    \end{itemize}
\end{theorem}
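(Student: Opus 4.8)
The plan is to recognize this theorem as a direct instantiation of Corollary~\ref{cor:2combination} in the special case $k=2$, applied once per graph subclass. Every bipartite graph has $\chi(G)=2$, so the class $\mathcal{C}$ in the corollary can be taken to be the relevant subclass of bipartite graphs with fixed chromatic number $k=2$. For each bullet I would plug in, as the classical online algorithm $A_1$, the specific algorithm whose color guarantee is quoted in the paragraph preceding the theorem: the simple $2\log n+1$ algorithm for general bipartite graphs, \bcm for $P_6$-free graphs, and the respective \bcm-based algorithms for $P_7$-, $P_8$- and $P_9$-free graphs. Since $A_1$ differs across the bullets, this is what produces the ``different'' algorithms promised in the statement.

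First I would dispose of the case $\eta(G)=0$: here Corollary~\ref{cor:2combination} gives $A'(G)=k=2=\chi(G)$ directly, so the competitive ratio $A'(G)/\chi(G)$ equals $1$, as claimed. Next, for $\eta(G)>0$, Corollary~\ref{cor:2combination} yields $A'(G)\le 3\min\{\ffpshort(G),A_1(G)\}$. Dividing by $\chi(G)=2$ and using $\tfrac{3}{2}\min\{a,b\}=3\min\{a/2,b/2\}$, the competitive ratio is at most $3\min\{\ffpshort(G)/2,\,A_1(G)/2\}$. For the first term, Theorem~\ref{thm:consistency} gives $\ffpshort(G)\le\eta(G)+\chi(G)=\eta(G)+2$, hence $\ffpshort(G)/2\le \tfrac{\eta(G)}{2}+1$, which is exactly the prediction-dependent term in the statement.

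For the second term I would set $X:=A_1(G)/2$, i.e. the (online-)competitive ratio of the chosen classical algorithm, and read off its value from the quoted color bounds: $(2\log n+1)/2=\Theta(\log n)$ for general bipartite graphs; $(2\chi_{OL}(G)-1)/2=\chi_{OL}(G)-\tfrac12$ for $P_6$-free; $(4\chi_{OL}(G)-1)/2=2\chi_{OL}(G)-\tfrac12$ for $P_7$-free; $3(\chi_{OL}(G)+1)^2/2$ for $P_8$-free; and $6(\chi_{OL}(G)+1)^2/2=3(\chi_{OL}(G)+1)^2$ for $P_9$-free. Substituting these back into $3\min\{\tfrac{\eta(G)}{2}+1,\,A_1(G)/2\}$ reproduces $3\min\{\tfrac{\eta(G)}{2}+1,X\}$ in each case.

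There is no deep conceptual obstacle here; essentially all the content lies in verifying that each quoted algorithm legitimately plays the role of $A_1$ in Corollary~\ref{cor:2combination} and in carrying out the division by $\chi(G)=2$. The one point that genuinely needs care is the mismatch between the denominator $\chi(G)=2$ defining the competitive ratio and the fact that the classical guarantees for the $P_6$- through $P_9$-free classes are \emph{online}-competitive, i.e. stated in terms of $\chi_{OL}(G)$ rather than $\chi(G)$; consequently the resulting $X$ is the online-competitive ratio of $A_1$ and may grow with $\chi_{OL}(G)$ even though $\chi(G)$ stays fixed at $2$. I would also note in passing the degenerate edgeless bipartite graphs (where $\chi=1$), which I would either exclude from the class or absorb by treating $k=2$ as an upper bound on the chromatic number, so that the hypotheses of Corollary~\ref{cor:2combination} are met.
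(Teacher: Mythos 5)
Your proposal is correct and follows exactly the paper's own route: the paper proves this theorem in one line by applying Corollary~\ref{cor:2combination} with $k=2$ to \ffpshort and the respective classical algorithm for each bipartite subclass, which is precisely your argument (your derivation of each $X$ as $A_1(G)/\chi(G)$ and of the term $\frac{\eta(G)}{2}+1$ from Theorem~\ref{thm:consistency} just makes explicit what the paper leaves implicit).
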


\subsubsection{Other graph classes}
Besides bipartite graphs, the (online) graph coloring problem has been extensively studied for several other graph classes. Among them, for instance, are chordal graphs, intersection graphs, $d$-inductive graphs (also known as $d$-degenerate graphs), graphs with bounded treewith and graphs with forbidden induced subgraphs.
A \emph{chordal graph} is a simple graph, in which every cycle of length greater than three has a chord. It is known that \ff colors every chordal graph $G$ with $\chi(G)=d$ using $O(d \cdot \log n)$ colors~\cite{[16]}, which is best possible for any deterministic online algorithm~\cite{[12]}.
The intersection graph of a set of disks in the Euclidean plane is the graph having a vertex for each disk and an edge between two vertices if and only if the corresponding disks overlap. A graph $G$ is called a \emph{disk graph}, if there exists a set of disks in the Euclidean plane whose intersection graph is $G$. \ff is also $\Theta(\log n)$-competitive on disk graphs, and it is best possible for any deterministic online algorithm~\cite{[13]}.

A graph is \emph{$d$-inductive} (or $d$-degenerate), if it can be reduced to $K_1$ by repeatedly deleting vertices of degree at most $d$. Intuitively, the \emph{treewidth} of a graph $G$ is an integer denoting how far $G$ is from being a tree. More formally, a \emph{tree decomposition} of $G=(V,E)$ is a tree $T$ with vertices $Y_1,Y_2,\dots, Y_n$ where $Y_i\subseteq V$ for all $i$ and $\bigcup_i V_i = V$, such that all $Y_i$'s that contain a vertex $v\in V$ form a connected subtree in $T$, and furthermore for all $e=(v,w)\in E$ there exists a $Y_i$ such that $v\in Y_i$ and $w\in Y_i$. The \emph{width} of a tree decomposition is the size of its largest set $Y_i$ minus one, and the \emph{treewidth} of graph $G$ is defined as the minimum width among all possible tree decompositions of $G$. \ff colors any $d$-inductive graph and any graph of treewidth $d$ using $O(d \cdot \log n)$ colors~\cite{[12]}. This is best possible for both classes, by the aforementioned lower bound on chordal graphs and the fact that every chordal graph $G$ is $(\chi(G)-1)$-inductive and has treewidth $\chi(G)-1$~\cite{Bodlaender93}.

So altogether applying Theorem~\ref{thm:combination} on \ffpshort and \ff, we get the following result for chordal graphs, disk graphs and $d$-inductive graphs, as well as for graphs of treewidth $d$.

\begin{theorem}
    There exist (different) algorithms for online coloring chordal graphs, $d$-inductive graphs, graphs of treewidth $d$ and disk graphs with predictions obtaining a competitive ratio of
    $1$ if $\eta(G) = 0$, and $2\min\{\frac{\eta(G)}{\chi(G)}+1,X\}$ otherwise, where $X=O(\log n)$ is the competitive ratio of the respective classical online algorithm.
\end{theorem}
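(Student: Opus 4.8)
The plan is to realize the stated algorithm as an instance of the combination framework of Theorem~\ref{thm:combination} applied to the two algorithms \ffpshort and the classical \ff. The two ingredients I need are: by Theorem~\ref{thm:ffp-exact-analysis}, $\ffpshort(G)\le\chi(G)+\eta(G)$, so \ffpshort has competitive ratio $1+\tfrac{\eta(G)}{\chi(G)}$; and, as reviewed above, on each of the four classes \ff colors any member $H$ with $O(\chi(H)\log n)$ colors, i.e.\ it has competitive ratio $X=O(\log n)$. Instantiating Theorem~\ref{thm:combination} with $t=2$, $A_1=\ffpshort$, $A_2=\ff$ produces a meta-algorithm $A$ with $A(G)\le 2\min\{\ffpshort(G),\ff(G)\}$; dividing by $\chi(G)$ and inserting the two ratios yields the claimed $2\min\{\tfrac{\eta(G)}{\chi(G)}+1,X\}$. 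This already delivers smoothness and robustness, and it is the bound asserted for the regime $\eta(G)>0$.

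Before trusting \ff as the fallback, I would check that the non-monotonicity warning stressed earlier in the paper causes no trouble here. In general \ff is not weakly monotone, so a combination that effectively runs it on a suffix-subgraph could be far more costly than running it on the full input. The rescue is that all four classes are \emph{hereditary}: chordal graphs, $d$-inductive graphs, graphs of treewidth $d$, and disk graphs are each closed under taking induced subgraphs, and the chromatic number only drops under this operation. Consequently the bound $\ff(H)\le X\cdot\chi(H)$ holds for \emph{every} induced subgraph that the combination could present to \ff, so the fallback meets its guarantee no matter at which vertex the meta-algorithm starts preferring it.

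It remains to secure the exact ratio $1$ in the perfect-prediction case. The natural patch, mirroring Corollary~\ref{cor:2combination}, is to first color each vertex $v$ directly with $P(v)$ and only surrender control to the meta-algorithm $A$ once following a prediction would break properness; when $\eta(G)=0$ the predictions coincide with an optimal coloring, no conflict ever arises, and the output is a proper $\chi(G)$-coloring. I expect the main obstacle to be reconciling this $1$-consistency with the factor-$2$ robust bound inside one algorithm, precisely because $\chi(G)$ is here not a fixed constant. In the bipartite theorem the ``$k+1$ distinct predictions'' guard of Corollary~\ref{cor:2combination} caps an otherwise wasteful prediction-following prefix; with $\chi(G)$ unknown that guard is unavailable, and a prefix can be proper yet use many more than $\chi(G)$ colors, which would wreck the robustness bound when $\eta(G)>0$. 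Designing a stopping rule that curbs such a prefix without knowing $\chi(G)$ and without discarding perfect predictions is the delicate heart of the argument; my intended way out is to keep \ffpshort and \ff running in parallel under the min-selection rule of Theorem~\ref{thm:combination} so that the prediction-following part is charged only against \ffpshort's own palette, with heredity bounding the \ff side.
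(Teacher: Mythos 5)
Your first paragraph is precisely the paper's proof: in the paper this theorem is justified by a single sentence, ``applying Theorem~\ref{thm:combination} on \ffpshort and \ff,'' with the ratio $X=O(\log n)$ of \ff on chordal, $d$-inductive, treewidth-$d$ and disk graphs taken from the cited literature. So for the $2\min\{\frac{\eta(G)}{\chi(G)}+1,X\}$ bound you agree with the paper. Your heredity argument, however, addresses a non-issue: the meta-algorithm of Theorem~\ref{thm:combination} never runs any constituent algorithm on a suffix-subgraph. All of the combined algorithms are simulated in parallel on the entire revealed prefix, each with its own distinct palette, and the meta-algorithm merely adopts the color assigned by the currently best simulation; the proof uses only the trivial prefix monotonicity $\text{ALG}(G(j))\le \text{ALG}(G(j+1))$. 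The non-monotonicity discussion in the paper is what motivates this palette-based design (it explains why naive switching to a fallback on the suffix fails); it is not a hypothesis you need to discharge, so heredity of the four classes plays no role.

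The difficulty you flag in your last paragraph is real, and neither your proposal nor the paper resolves it. Plain min-selection does not give ratio $1$ when $\eta(G)=0$: with perfect predictions \ffpshort uses one color per palette, but \ff can be strictly ahead on early prefixes (for instance when the first few vertices form an independent set with distinct predicted colors, \ff uses one color while \ffpshort uses one per palette), so the meta-algorithm mixes colors from both palettes and can end up using up to $2\chi(G)$ colors; Theorem~\ref{thm:combination} only yields $2$-consistency. The patch of Corollary~\ref{cor:2combination} requires the guard ``$k+1$ distinct predictions,'' hence a known chromatic number, which none of these four classes provides. The paper simply asserts the ``$1$ if $\eta(G)=0$'' clause without argument, and its own Discussion section concedes that consistency better than $2$ with unknown $\chi(G)$ is an open problem---in direct tension with that clause. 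So you have correctly isolated a gap, but it is a gap in the theorem as stated rather than a defect unique to your write-up; just be aware that your proposed way out (parallel min-selection plus heredity) does not close it, and as far as the paper's machinery goes nothing does.
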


The complementary notion of a clique is an \emph{independent set}, that is a set $S\subseteq V$ such that no two vertices of $S$ are adjacent. An independent set of size $s$ is denoted by $I_s$. \ff is known to achieve a competitive ratio of $t-1$ on $K_{1,t}$-free graphs where $t\geq3$~\cite{[14]}, and there exist classical online algorithms which are $\frac{s}{2}$-competitive on $I_s$-free graphs~\cite{[15]}. Therefore, applying Theorem~\ref{thm:combination} to \ffpshort and the respective algorithm for each of these two classes of graphs gives the following theorem.

\begin{theorem}
There exist (different) algorithms for online coloring $I_s$-free graphs and $K_{1,t}$-free graphs for $t\geq3$ with predictions obtaining a competitive ratio of $1$ if $\eta(G) = 0$, and
$2\min\{\frac{\eta(G)}{\chi(G)}+1,X\}$ otherwise, where $X$ is
    \begin{itemize}[topsep=0pt,parsep=0pt,partopsep=0pt]
        \item  $t-1$ for $K_{1,t}$-free graphs,
        \item $\frac{s}{2}$ for $I_s$-free graphs.
    \end{itemize}
\end{theorem}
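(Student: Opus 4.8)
The plan is to obtain the statement as a two-algorithm instance of the combination framework in Theorem~\ref{thm:combination}, so that the only class-specific content is plugging in the competitive ratio of a suitable prediction-oblivious algorithm. Concretely, for each class I would combine \ffpshort with one classical online coloring algorithm $A_1$: for $K_{1,t}$-free graphs ($t\ge 3$) I take $A_1=\ff$, which is $(t-1)$-competitive on this class by~\cite{[14]}, giving $X=t-1$; for $I_s$-free graphs I take the $\tfrac{s}{2}$-competitive algorithm of~\cite{[15]}, giving $X=\tfrac{s}{2}$.

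First I would fix an arbitrary input $(G,\pi,P)$ from the class and record the two per-algorithm guarantees: by Theorem~\ref{thm:ffp-exact-analysis} we have $\ffpshort(G)=\bigl(1+\eta(G)/\chi(G)\bigr)\chi(G)$, and by the cited result $A_1(G)\le X\cdot\chi(G)$. Applying Theorem~\ref{thm:combination} with $t=2$ to the pair $(\ffpshort,A_1)$ yields a meta-algorithm $A$ with $A(G)\le 2\min\{\ffpshort(G),A_1(G)\}$. Dividing by $\chi(G)$ and substituting the two guarantees gives competitive ratio $2\min\{\eta(G)/\chi(G)+1,\,X\}$, which is exactly the bound claimed for the case $\eta(G)>0$. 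This part is a pure substitution and requires no new combinatorics beyond the two cited classical bounds.

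It remains to upgrade the guarantee from the factor $2$ that the plain combination pays to the claimed ratio $1$ in the perfect-prediction case $\eta(G)=0$. The mechanism I would use is the trust-then-combine device behind Corollary~\ref{cor:2combination}: run a preliminary phase that colors each arriving vertex $v$ directly with $P(v)$, and only cede control to the combination $A$ on the remaining suffix once this becomes untenable. When $\eta(G)=0$ the predictions form an optimal proper coloring, so no monochromatic edge is ever created; the phase colors all of $G$ with exactly $\chi(G)$ colors and the ratio is $1$. When $\eta(G)>0$, a conflict-free prefix colored by predictions coincides vertex-for-vertex with \ffpshort on that prefix, so the colors spent there are already subsumed by the \ffpshort bound, and the suffix is governed by the previous paragraph.

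The hard part, and the single step that is not mechanical, is making the switch-over rule correct when $\chi(G)$ is not known to the algorithm. In the bipartite setting of Corollary~\ref{cor:2combination} the chromatic number is the known constant $k$, so the preamble can safely abort the moment $k+1$ distinct colors have been predicted; here no such threshold is available. The failure mode to rule out is predictions that are \emph{proper but highly suboptimal}: then $\eta(G)>0$ yet no properness conflict ever triggers, and blindly following the predictions to the end could cost up to $\eta(G)+\chi(G)$ colors, overshooting the target $2X\chi(G)$ whenever $\eta(G)/\chi(G)+1>2X$. I would therefore have to design a stopping rule that still fires in this case while never firing when $\eta(G)=0$ --- the obstruction being that on short prefixes the class-specific algorithm $A_1$ may legitimately use far fewer colors than an optimal coloring of the whole graph, so a naive multiplicative comparison against $A_1$ aborts even on perfect predictions. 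Getting a trigger that simultaneously preserves both branches of the $\min$ and the clean $\eta(G)=0$ behaviour is where I expect the real work to lie; absent such a refinement one still obtains the fully robust ratio $2\min\{\eta(G)/\chi(G)+1,X\}$ for all $\eta$, with ratio at most $2$ at $\eta(G)=0$.
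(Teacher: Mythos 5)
Your derivation of the $\eta(G)>0$ bound is exactly the paper's proof: the paper applies Theorem~\ref{thm:combination} with two algorithms to \ffpshort (whose guarantee is Theorem~\ref{thm:ffp-exact-analysis}) and, respectively, to \ff on $K_{1,t}$-free graphs~\cite{[14]} and to the $\frac{s}{2}$-competitive algorithm on $I_s$-free graphs~\cite{[15]}, yielding $2\min\{\frac{\eta(G)}{\chi(G)}+1,X\}$ by precisely the substitution you describe.

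For the ``$1$ if $\eta(G)=0$'' clause, the obstruction you point out is genuine, and you should know that the paper does not overcome it either: its entire justification of this theorem is the single sentence that applying Theorem~\ref{thm:combination} to \ffpshort and the respective classical algorithm gives the result, and, as you correctly note, that meta-algorithm only guarantees $2\min\{1,X\}\cdot\chi(G)=2\chi(G)$ colors when $\eta(G)=0$, i.e.\ consistency $2$, not $1$. The trust-then-combine device of Corollary~\ref{cor:2combination} does repair this, but its abort rule (``more than $k$ distinct predicted colors so far'') requires the chromatic number $k$ to be fixed and known for the whole class; this holds in the bipartite theorem ($k=2$) but fails here, since neither $K_{1,t}$-freeness nor $I_s$-freeness bounds $\chi(G)$, and your ``proper but wasteful predictions'' scenario is exactly why a purely conflict-based trigger cannot substitute for that threshold. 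Indeed, the paper's own Discussion section concedes that consistency better than $2$ with unknown chromatic number is an open question, which is in tension with this theorem as stated (the same issue affects the chordal/disk/$d$-inductive theorem). So your fallback conclusion --- the robust bound $2\min\{\frac{\eta(G)}{\chi(G)}+1,X\}$ for all $\eta$, hence consistency $2$ --- is what the paper's argument actually establishes; consistency $1$ at $\eta(G)=0$ is recovered only under the additional assumption that $\chi(G)$ is known, at the price of the factor $3$ of Corollary~\ref{cor:2combination}.
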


\section{Discussion}
In this paper, we presented a simple learning augmented algorithm for graph coloring that is $2$-consistent, smooth and robust. When the chromatic number of the input graph is known, a $1$-consistent, smooth and robust algorithm is obtained. It would be interesting to investigate whether a learning augmented algorithm with a consistency better than $2$ is possible, when the chromatic number of the input graph is not known to the algorithm.

Furthermore, all presented algorithms are smooth and the number of used colors grows linearly with the prediction error. It is an open question whether there exist any learning augmented algorithms which achieves the same consistency, but with a better dependence on the prediction error.

\bibliographystyle{plain}
\bibliography{references}

\pagebreak
\appendix
\section{Monotonicity Counterexamples}
\label{app:weak-monotonicity}

\begin{lemma}
\ff is not weakly monotone.
\end{lemma}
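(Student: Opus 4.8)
The plan is to exhibit a family of bipartite online graphs on which \ff uses only $2$ colors, but which contain a suffix whose induced online subgraph forces \ff to use an unbounded number of colors; this rules out any constant $c$ in the definition of weak monotonicity. For the ``bad suffix'' I would reuse the well-known crown graph $K_{n,n}$ minus a perfect matching, with parts $A=\{a_1,\dots,a_n\}$ and $B=\{b_1,\dots,b_n\}$ and $a_i\sim b_j$ iff $i\ne j$, presented in the adversarial order $a_1,b_1,a_2,b_2,\dots,a_n,b_n$ (this is essentially the example recalled in the introduction). A short inductive check shows that under this order \ff assigns color $c_i$ to both $a_i$ and $b_i$, so it uses $n$ colors on an input whose chromatic number is $2$.

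The crucial idea is to prepend a prefix that forces \ff to recover a proper $2$-coloring on the full instance, thereby collapsing the cascade. Concretely, I would add to each vertex $b_j\in B$ a private pendant vertex $p_j$ (adjacent only to $b_j$), and let $\pi$ present all pendants first (in any order) and then the crown graph in the adversarial order above. Since the pendants form an independent set revealed before their partners, each $p_j$ receives color $c_0$. I then claim that on the full instance \ff assigns color $c_0$ to every $a_i$ and every pendant, and color $c_1$ to every $b_j$, so it uses exactly $2$ colors.

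This claim I would verify by induction along $\pi$: when $a_i$ is revealed its present neighbors are exactly $b_1,\dots,b_{i-1}$, which by the hypothesis already carry color $c_1$, so $a_i$ takes color $c_0$; when $b_j$ is revealed its present neighbors are $p_j$ together with $a_1,\dots,a_{j-1}$, all of color $c_0$ (note $a_j\not\sim b_j$, so $a_j$ imposes no constraint), hence $b_j$ is forced off color $c_0$ and takes color $c_1$. Adding the pendants keeps the graph bipartite, so the counterexample lives entirely within the bipartite graphs. Finally, taking the suffix $\pi(i^\star)$ that begins at the first crown-graph vertex, the induced online subgraph $(G(i^\star),\pi(i^\star))$ is precisely the crown graph in its adversarial order, on which \ff uses $n$ colors, while $\ff(G,\pi)=2$. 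Hence $\ff(G(i^\star),\pi(i^\star))/\ff(G,\pi)=n/2$, which is unbounded as $n\to\infty$, so no constant $c$ can satisfy the weak-monotonicity inequality.

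The main obstacle is the design of the prefix: one needs a gadget that, when present, steers \ff into the ``correct'' bipartition color on every crown-graph vertex and thus erases the color cascade, while contributing no new colors itself. The pendant-flipping trick above is what achieves this, and the only real content of the argument is the cascade-breaking induction that certifies the full instance uses just two colors.
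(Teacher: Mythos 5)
Your proof is correct and takes essentially the same approach as the paper: both use the crown graph ($K_{n,n}$ minus a perfect matching) revealed pair by pair as the bad suffix, on which \ff cascades to $\Theta(n)$ colors, and both prepend a cascade-breaking gadget that anchors \ff into the proper $2$-coloring of the full instance. The only difference is the gadget itself --- the paper simply adds the single matching edge $v_{1}u_{1}$ and reveals that pair first, whereas you add $n$ pendant vertices --- but the mechanism and the conclusion (an unbounded ratio between suffix and full-instance colorings) are identical.
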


\begin{proof}
 Consider a bipartite graph $G$ (see Figure $1$) with bipartition $X=\{v_{1},\ldots,v_{n}\}$ and $Y=\{u_{1},\ldots,u_{n}\}$ and $E(G)=\{v_{1}u_{1}\}\cup\{v_{i}u_{j}|i\neq{j}, i,j= 1,2,\ldots,n\}$ be the edge set of $G$. The vertices arrive one by one in the following order:\\
 $$v_{1},u_{1},v_{2},u_{2},\ldots,v_{n},u_{n}.$$
 Now we take $v_{1}$ and $u_{1}$ away from $G$ and let $G'$ denote the remaining subgraph. The order of arrival of the vertices in $G'$ stays the same as that in $G$. It is easy to verify that \ff would only use two colors on $G$ whereas $n-1$ colors on  $G'$.
\end{proof}

\begin{figure}
\begin{center}
   \begin{tikzpicture}[scale=0.6,auto,swap]
   \tikzstyle{blackvertex}=[circle,draw=black]
   \node [label=above:$v_{1}$ ,blackvertex,fill=black,scale=0.3] (a1) at (-13,2) {};
   \node [label=above:$v_{2}$ ,blackvertex,fill=black,scale=0.3] (a2) at (-11,2) {};
   \node [label=above:$v_{3}$ ,blackvertex,fill=black,scale=0.3] (a3) at (-9,2) {};
   \node [label=below: $\cdots$] (a4) at (-7.5,2.5) {};
   \node [label=above:$v_{n}$ ,blackvertex,fill=black,scale=0.3] (a5) at (-6,2) {};

   \node [label=below:$u_{1}$ ,blackvertex,fill=black,scale=0.3] (a6) at (-13,-1) {};
   \node [label=below:$u_{2}$ ,blackvertex,fill=black,scale=0.3] (a7) at (-11,-1) {};
   \node [label=below:$u_{3}$ ,blackvertex,fill=black,scale=0.3] (a8) at (-9,-1) {};
   \node [label=below: $\cdots$] (a9) at (-7.5,-0.5) {};
   \node [label=below:$u_{n}$ ,blackvertex,fill=black,scale=0.3] (a10) at (-6,-1) {};
   \node [label=right: $G$] (a11) at (-10,-2) {};
   \draw [black,thick] (a1)--(a6);
   \draw [black,thick] (a1)--(a7);
   \draw [black,thick] (a1)--(a8);
   \draw [black,thick] (a1)--(a10);
   \draw [black,thick] (a2)--(a6);
   \draw [black,thick] (a2)--(a8);
   \draw [black,thick] (a2)--(a10);
   \draw [black,thick] (a3)--(a6);
   \draw [black,thick] (a3)--(a7);
   \draw [black,thick] (a3)--(a10);
   \draw [black,thick] (a5)--(a6);
   \draw [black,thick] (a5)--(a7);
   \draw [black,thick] (a5)--(a8);

   \tikzstyle{blackvertex}=[circle,draw=black]
   \node [label=above:$v_{2}$ ,blackvertex,fill=black,scale=0.3] (a2) at (-0.5,2) {};
   \node [label=above:$v_{3}$ ,blackvertex,fill=black,scale=0.3] (a3) at (1.2,2) {};
   \node [label=below: $\cdots$] (a4) at (2.2,2.5) {};
   \node [label=above:$v_{n}$ ,blackvertex,fill=black,scale=0.3] (a5) at (3.5,2) {};
   \node [label=below:$u_{2}$ ,blackvertex,fill=black,scale=0.3] (a7) at (-0.5,-1) {};
   \node [label=below:$u_{3}$ ,blackvertex,fill=black,scale=0.3] (a8) at (1.2,-1) {};
   \node [label=below: $\cdots$] (a9) at (2.2,-0.5) {};
   \node [label=below:$u_{n}$ ,blackvertex,fill=black,scale=0.3] (a10) at (3.5,-1) {};
   \node [label=right: $G'$] (a11) at (1.2,-2) {};
   \draw [black,thick] (a2)--(a8);
   \draw [black,thick] (a2)--(a10);
   \draw [black,thick] (a3)--(a7);
   \draw [black,thick] (a3)--(a10);
   \draw [black,thick] (a5)--(a7);
   \draw [black,thick] (a5)--(a8);

\end{tikzpicture}
\caption{\ff is not weakly monotone.}
\label{fig:ff}
\end{center}
\end{figure}
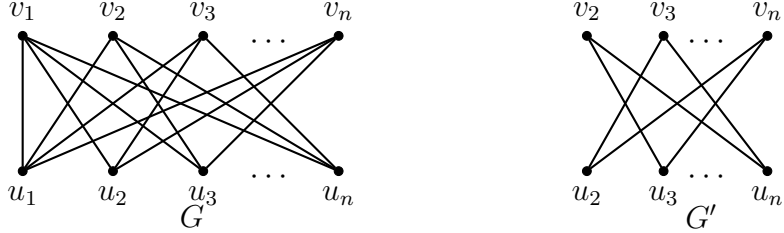

\begin{lemma}
\label{lem:bcm-not-monotone}
 \bcm is not weakly monotone.
\end{lemma}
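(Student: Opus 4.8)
The plan is to mirror the structure of the preceding counterexample for \ff. I would exhibit a family of online graphs $(G,\pi)$, parameterized by $n$, on which \bcm uses only a constant number of colors, together with a suffix $(G(i),\pi(i))$ on which \bcm is forced to use a number of colors that grows with $n$. Since weak monotonicity requires $\bcm(G(i),\pi(i)) \le c\cdot \bcm(G,\pi)$ for a single fixed constant $c$ across all instances and all $i$, an unbounded ratio between the two quantities rules out any such $c$ and establishes the claim. I would keep $G$ bipartite throughout, so that the counterexample holds even on the graph class for which \bcm is designed, matching the corresponding remark in the main text.

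First I would fix the precise coloring rule of \bcm as defined in the source (which resolves a new vertex's color from the bipartition of its current connected component). For the suffix I would use a worst-case online bipartite graph $(H,\sigma)$ for \bcm: since \bcm is a deterministic online algorithm, and since every deterministic online algorithm can be forced to use at least $2\log n - 10$ colors on some bipartite graph on $n$ vertices~\cite{[5]}, there is an instance on which \bcm uses $\Omega(\log n)$ colors. I would take this instance (or the explicit recursive tree construction realizing the lower bound) as the portion of the input revealed in the suffix, so that it reappears verbatim as the induced online subgraph $(G(i),\pi(i))$ once the prefix is removed.

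The heart of the construction is a short prefix gadget that, when revealed before $(H,\sigma)$, collapses \bcm onto a constant-size palette on the whole instance $(G,\pi)$ — exactly as the anchoring pair $v_1,u_1$ pins the bipartition classes in the \ff example. Concretely, I would attach to the two sides of $H$'s bipartition a small set of anchor vertices (for instance two adjacent vertices, each made universal to the opposite side), reveal them first, and argue that once the anchors are colored, every vertex of $H$ arrives already embedded in a single connected component with a fully determined bipartition and an adjacent anchor of fixed low color, forcing \bcm to reuse a bounded palette. The extra edges are incident only to the anchors, so deleting the anchors leaves exactly $(H,\sigma)$ as the suffix; non-monotonicity then follows because $\bcm(G,\pi)=O(1)$ while $\bcm(G(i),\pi(i))=\Omega(\log n)$.

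I expect the main obstacle to be the first verification. Unlike \ff, \bcm is component- and bipartition-aware rather than naively greedy, so a crude anchor need not suppress its palette; I must design the gadget (and the embedding of $H$ into $G$) to interact correctly with \bcm's specific merging and tie-breaking behavior, and then carry out a careful case analysis confirming that the anchored full sequence really does drive \bcm down to $O(1)$ colors. By contrast, the suffix lower bound is comparatively routine, as it can be inherited directly from the generic deterministic lower bound~\cite{[5]} (or re-derived via a direct recursive adversary argument against \bcm).
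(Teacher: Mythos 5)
Your high-level construction is the same as the paper's: reveal a two-vertex anchor gadget (an edge $vv'$ with $v$ universal to one side of the bipartition and $v'$ universal to the other) before a hard instance, so that the revealed graph stays connected at every step and \bcm is held to $O(1)$ colors, while the suffix alone forces $\Omega(\log n)$ colors. However, the step you explicitly defer --- verifying that the anchored sequence really drives \bcm down to $O(1)$ colors --- is the step that carries the whole proof, and your choice of suffix makes it unavailable. The fact that closes this step is a property of \bcm cited from~\cite{[2]}: \bcm uses only two colors on any bipartite \emph{$P_6$-free} graph that remains connected throughout the execution. To invoke it, the entire anchored instance must be $P_6$-free. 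You instead take the suffix to be the generic adversarial bipartite instance of~\cite{[5]}, which is valid against every deterministic algorithm but carries no $P_6$-freeness guarantee; the anchored graph may then contain induced copies of $P_6$, and you are left having to analyze \bcm's component-merging behavior on arbitrary connected bipartite graphs --- an analysis you do not carry out, and which no cited result supplies.

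The repair is exactly the paper's choice of suffix: start from the online bipartite $P_6$-free graph $(G=(V_1\cup V_2,E),\pi)$ of~\cite{[3],[10]} on which \bcm itself requires $\Omega(\log n)$ colors, attach $v$ and $v'$ as you describe with arrival order $\pi'=v,v',\pi$, and check that the anchored graph $G'$ remains bipartite and $P_6$-free. The latter is a short but necessary argument (your sketch omits it entirely): an induced $P_6$ in $G'$ would have to contain $v$ or $v'$ together with at least two vertices from each of $V_1$ and $V_2$, but since $v$ is adjacent to all of $V_1$ and $v'$ to all of $V_2$, such a vertex set cannot induce a path. Then connectivity of every prefix of $\pi'$ plus the cited two-coloring property give that \bcm uses two colors on $(G',\pi')$, while it uses $\Omega(\log n)$ colors on the suffix $(G,\pi)$, which rules out weak monotonicity for any constant $c$. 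In short: your strategy is the right one, but the ``main obstacle'' you flag is not a technicality to be postponed; it is resolvable only because the hard instance can be chosen $P_6$-free, which is precisely what your appeal to the generic lower bound of~\cite{[5]} gives up.
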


\begin{proof}
In~\cite{[3],[10]} the authors present an online bipartite $P_6$-free graph $(G=(V_1\cup V_2,E),\pi)$ for which \bcm requires $O(\log n)$ colors. We construct an online $P_6$-free bipartite graph $(G',\pi')$ as follows. Start with $G$ and add two vertices $v$ and $v'$. Add edges connecting $v$ to $v'$ and each of the vertices of $V_1$, and edges connecting $v'$ to each of the vertices in $V_2$. The sequence of vertex arrivals is $\pi' = v,v', \pi$. By construction $G'$ is bipartite. It is also $P_6$-free, suppose on the contrary there is an induced $P_6$ in $G'$. Since $G$ is $P_6$ free, such an induced $P_6$ would have to contain $v$ or $v'$ and at least two vertices from $V_1$ as well as at least two vertices of $V_2$. But $v$ is adjacent to all the vertices in $V_1$ and $v'$ is adjacent to all the vertices in $V_2$, so the induced subgraph cannot be a path, contradiction.

Note that by the definition of $(G',\pi)$ in each step of the (any) algorithm the currently revealed subgraph will always be connected. By the definition of \bcm (see~\cite{[2]}), it only uses $2$ colors on any bipartite $P_6$-free graph that remains connected throughout the execution of the algorithm. In other words, \bcm uses $2$ colors on $(G',\pi')$ but requires $O(\log n)$ colors on $(G,\pi)$ (where $\pi$ is a suffix of $\pi'$ and $G$ is the subgraph of $G'$ induced by the vertices in $\pi$).
\end{proof}
In~\cite{[3],[10]} extensions of \bcm were presented for $P_8$-free and $P_9$-free bipartite graphs. The construction in the proof of Lemma~\ref{lem:bcm-not-monotone} can be used to show that these algorithms are also not weakly monotone. We defer to the full version for a more extensive discussion.

\section{Lower Bound for the Meta-Algorithm}\label{AppB}
\begin{lemma}~\label{lem:lower bound}
    There exist deterministic algorithms $A_1, A_2, \dots, A_t$ and an online graph with predictions $(G,\pi, P)$ such that the meta-algorithm $A$ that combines $A_1, A_2 \dots, A_t$ in the way described in the proof of Theorem~\ref{thm:combination} uses exactly $t\cdot\min_{1\le i \le t} A_i(G)$ colors for $(G,\pi, P)$.
\end{lemma}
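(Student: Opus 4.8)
The plan is to exhibit, for any fixed $t$ and any $m\ge 2$, an online graph $(G,\pi)$ and $t$ deterministic online algorithms whose color-count profiles force the meta-algorithm to spend a full, fresh palette on each of $t$ consecutive blocks, so that it uses $tm$ colors while the best single algorithm uses only $m$. Concretely, I would let $G$ be the vertex-disjoint union of $t$ cliques $B_1,\dots,B_t$, each a copy of $K_m$, revealed block by block: first all $m$ vertices of $B_1$, then all of $B_2$, and so on. Since each $B_\tau$ is a clique, every online algorithm is forced to use $m$ distinct colors on it; since the cliques are disjoint, an algorithm may reuse the same $m$ colors across all blocks, so $\min_i A_i(G)=m$ is attainable, and indeed every algorithm uses at least $m$ colors.

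Next I would specify the algorithms by prescribing, for each $A_i$, a proper coloring $\phi_i$ of $G$ drawn from $A_i$'s private palette $C_i$; the quantity $A_i(G(j))$ is then exactly the number of distinct colors $\phi_i$ assigns to $v_1,\dots,v_j$, since $A_i$ is online and its colors on a prefix do not depend on the future. I would choose $\phi_i$ so that $A_i$ holds its count at exactly $m$ throughout blocks $B_1,\dots,B_i$ (reusing its $m$ colors on each of these cliques) and then jumps above $m$ already on the first vertex of $B_{i+1}$, so that $A_i(G(j))>m$ for every $j$ past block $i$; in particular $\phi_t$ reuses its $m$ colors on every block, giving $A_t(G)=m$. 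Adopting lowest-index tie-breaking for the meta-algorithm (without loss of generality, since we may list the algorithms in the order of whatever fixed priority rule the meta-algorithm uses), the set of count-minimizers during block $B_\tau$ is exactly $\{A_\tau,A_{\tau+1},\dots,A_t\}$: all of them sit at the current minimum value $m$, while $A_1,\dots,A_{\tau-1}$ have already jumped above $m$. Hence the meta-algorithm selects $A_\tau$ on every vertex of $B_\tau$, and because $B_\tau$ is a clique this contributes exactly $m$ distinct colors from $C_\tau$; as $A_\tau$ is selected only during $B_\tau$, palette $C_\tau$ contributes exactly $m$ colors overall.

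Summing over the $t$ blocks gives that the meta-algorithm uses exactly $tm$ colors while $\min_{1\le i\le t}A_i(G)=A_t(G)=m$, i.e. $A(G)=t\cdot\min_i A_i(G)$. The routine steps left to fill in are (i) writing down explicit colorings $\phi_i$ realizing the profiles above and checking each is proper on the disjoint-clique graph, and (ii) the standard remark that any proper coloring assigned in arrival order is the behaviour of some deterministic online coloring algorithm — one defines it on the sequence of revealed subgraphs of this particular instance and lets it run \ff elsewhere. The one genuinely delicate point, and the main obstacle, is the handling of ties: a short argument shows ties are unavoidable for matching the bound, since each palette contributes at most $m$ colors and a total of $tm$ forces each contribution to equal $m$; consequently, at the last step on which a block winner $A_\tau$ is selected it must already have used all $m$ of its colors while remaining a minimizer, which forces it to be tied at value $m$ with the eventual overall winner $A_t$. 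This is precisely why the construction must pin down the meta-algorithm's tie-breaking rule rather than rely on strict minima.
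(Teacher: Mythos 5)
Your proof is correct and takes essentially the same route as the paper: the paper's construction is exactly your block construction with $m=1$ (the graph $G=tK_1$ revealed vertex by vertex, where $A_i$ holds at one color for the first $i$ vertices and then spends fresh colors on the rest, with ties broken so that the meta-algorithm switches from $A_i$ to $A_{i+1}$ after round $i$). Your generalization to disjoint copies of $K_m$ and your closing observation that ties are unavoidable are sound, but the underlying mechanism --- disjoint blocks permitting color reuse, staged jumps above the minimum, and a pinned-down tie-breaking rule forcing one fresh palette per block --- is identical to the paper's.
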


\begin{proof}
    Let the color palette associated with $A_i$ be $C(i)=\{c_i^{0}, c_i^{1}, \dots, c_i^{t-1}\}$ for $1 \le i\le t$. Let $G=tK_1$ be given to the algorithm $A$ in an arbitrary order.
    Let $A_i$, for each $1\le i \le t$ be such that it uses $c_i^0$ to color the first $i$ vertices of $G$ and assigns a different color to each of the remaining vertices of $G$. Without loss of generality, we can assume that $A$ follows $A_1$ in the first iteration and switches from $A_i$ to $A_{i+1}$ after the $i$th iteration (the algorithm $A$ satisfies the description from the proof of Theorem~\ref{thm:combination}). The resulting coloring is trivially a proper coloring, since $G$ contains no edges. The lemma follows from the fact that $A$ switches to using a different algorithm after each round, and ends up using $t$ colors for $(G,\pi, P)$ while algorithm $A_t$ only uses $1$ color for $(G,\pi,P)$.
\end{proof}

\end{document}